\makeatletter
\@ifundefined{reserveinserts}{%
  \newcommand\reserveinserts[1]{}%
}{}
\makeatother
\makeatletter
\providecommand{\@history@dates}{}
\makeatother

\makeatletter
\AtBeginEnvironment{algorithmic}{\renewcommand{\ttfamily}{\rmfamily}}
\makeatother

\documentclass[APA,Times2COL]{WileyNJDv5}


\makeatletter
\long\def\@makecaption#1#2{%
  \vskip3pt
  \begingroup
  \normalsize
  \sbox\@tempboxa{{\bfseries #1}\quad #2}%
  \ifdim\wd\@tempboxa>\hsize
    {\bfseries #1}\quad #2\par
  \else
    \global\@minipagefalse
    \hb@xt@\hsize{\box\@tempboxa\hfil}%
  \fi
  \endgroup
  \vskip4pt
}
\makeatother

\articletype{Research Article}%

\journal{Computer-Aided Civil and Infrastructure Engineering}
\volume{00}
\copyyear{2026}
\startpage{1}

\usepackage[table]{xcolor}   
\usepackage{float}
\definecolor{headergray}{gray}{0.80}   
\definecolor{rowgray}{gray}{0.93}      
\usepackage{subcaption}

\usepackage{pifont}

\newcommand{\grayabstract}[1]{%
  \begingroup
  \setlength{\fboxsep}{6pt}
  \noindent
  \colorbox{gray!15}{%
    \parbox{0.66\linewidth}{#1}
  }%
  \endgroup
}

\begin{document}

\title{{Energy-Aware Reinforcement Learning for Robotic Manipulation of Articulated Components in Infrastructure Operation and Maintenance}}

\author[1,2]{Xiaowen Tao}

\author[3]{Yinuo Wang}

\author[1]{Haitao Ding}

\author[4]{Yuanyang Qi}

\author[1]{Ziyu Song}

\authormark{TAO \textsc{et al.}}
\titlemark{Energy-Aware Reinforcement Learning for Robotic Manipulation of Articulated Components in Infrastructure Operation and Maintenance}

\address[1]{\orgdiv{National Key Laboratory of Automotive Chassis Integration and Bionics}, \orgname{Jilin University}, \orgaddress{\state{Jilin}, \country{China}}}

\address[2]{\orgdiv{School of Computer Science and Statistics}, \orgname{Trinity College Dublin}, \orgaddress{\state{Dublin}, \country{Ireland}}}

\address[3]{\orgdiv{Arene}, \orgname{Woven by Toyota}, \orgaddress{\state{California}, \country{USA}}}

\address[4]{\orgdiv{Department of Civil Engineering}, \orgname{The University of Hong Kong}, \orgaddress{\state{Pokfulam}, \country{Hong Kong}}}

\corres{Xiaowen Tao, School of Computer Science and Statistics, Trinity College Dublin, Dublin, D02 PN40, Ireland. 
\email{taox@tcd.ie} \\
Ziyu Song, National Key Laboratory of Automotive Chassis Integration and Bionics, Jilin University, Changchun, 130000, China.
\email{songziyu@jlu.edu.cn}}

\fundingInfo{Joint Funds of the National Natural Science Foundation of China under Grant U1864206.}


\abstract[]{%
  \grayabstract{%
  \normalsize
    \textbf{Abstract}\par

With the growth of intelligent civil infrastructure and smart cities, operation and maintenance (O\&M) increasingly requires safe, efficient, and energy-conscious robotic manipulation of articulated components, including access doors, service drawers, and pipeline valves. However, existing robotic approaches either focus primarily on grasping or target object-specific articulated manipulation, and they rarely incorporate explicit actuation energy into multi-objective optimisation, which limits their scalability and suitability for long-term deployment in real O\&M settings. Therefore, this paper proposes an articulation-agnostic and energy-aware reinforcement learning framework for robotic manipulation in intelligent infrastructure O\&M. The method combines part-guided 3D perception, weighted point sampling, and PointNet-based encoding to obtain a compact geometric representation that generalises across heterogeneous articulated objects. Manipulation is formulated as a Constrained Markov Decision Process (CMDP), in which actuation energy is explicitly modelled and regulated via a Lagrangian-based constrained Soft Actor-Critic scheme. The policy is trained end-to-end under this CMDP formulation, enabling effective articulated-object operation while satisfying a long-horizon energy budget. Experiments on representative O\&M tasks demonstrate 16\%-30\% reductions in energy consumption, 16\%-32\% fewer steps to success, and consistently high success rates, indicating a scalable and sustainable solution for infrastructure O\&M manipulation. A repository is hosted at \url{https://github.com/allen-legged-robot/csac-arm-rl}.

  }%
}

\maketitle

\section{Introduction}\label{arch}
With the rapid development of intelligent civil infrastructure and smart cities, operation and maintenance (O\&M) has become a critical component in ensuring the safety, functionality, and sustainability of large-scale infrastructure systems. However, current infrastructure O\&M practices still rely heavily on manual intervention, which is often associated with high operational costs, significant safety risks, and low efficiency \cite{sanchez2016maintenance}. 

Some O\&M tasks are frequently performed in confined, hazardous, or hard-to-reach environments, posing considerable challenges for human operators and increasing the likelihood of operational errors. Moreover, infrastructure O\&M typically involves long-term and continuous operation, where energy consumption and resource utilisation play a decisive role in overall lifecycle performance \cite{du2023advances}. As infrastructure systems evolve toward greener and more sustainable development, there is an increasing demand for intelligent maintenance solutions that not only improve operational efficiency but also reduce energy usage and environmental impact \cite{jiao2023sustainable}. Therefore, enabling automated, energy-efficient operation for infrastructure systems has emerged as a significant and pressing engineering problem.

Real-world civil infrastructure encompasses a wide variety of operation and maintenance scenarios, among which several representative tasks require direct physical interaction with articulated components. Typical examples are opening access control doors to enter equipment rooms, pulling out control cabinets or service drawers for inspection, and turning pipeline valves to regulate flow in water supply, drainage, or fire protection systems \cite{liu2023literature}. These actions are fundamental steps in routine maintenance workflows and are essential for keeping infrastructure systems operating reliably. Together, they form a complete operational chain that spans equipment access, system inspection, and operating-parameter adjustment. Automating this chain can greatly improve maintenance efficiency and safety, and is a critical step toward truly intelligent and autonomous infrastructure management \cite{pregnolato2022towards}.

Although robotic technologies have been increasingly explored for infrastructure-related tasks, most existing methods primarily focus on grasping-centric manipulation, which mainly requires establishing stable contact between the gripper and an object \cite{billard2019trends}. In contrast, infrastructure maintenance often involves articulated object manipulation, where the robot must control the relative motion of connected components, such as doors, panels, or valves. Its difficulty is amplified by the sophisticated kinematic arrangements and nontrivial dynamic properties typical of articulated systems \cite{xie2023part}.

Current manipulation approaches can broadly be categorised into rule-based and learning-based methods. Rule-based manipulation relies heavily on predefined motion sequences and manually designed control strategies, which require extensive prior knowledge of object geometry and kinematic constraints \cite{chen2022review}. Such methods \cite{Zhao2020ModelBased} lack flexibility and adaptability, making them unsuitable for diverse and evolving infrastructure environments.

Learning-based robotic manipulation has therefore emerged as a promising alternative, particularly with the integration of visual perception enabled by low-cost and information-rich vision sensors \cite{ai2025review}. Several approaches \cite{Wang2019RLManipulation} attempt to reconstruct object geometry and estimate kinematic properties for manipulation planning, but these typically rely on interactive perception, making them unsuitable for untrained objects. Other methods \cite{wu2021vat} utilise visual affordance learning to guide manipulation strategies; however, these techniques still exhibit significant limitations, including strong reliance on expert knowledge for motion control, low sample efficiency in reinforcement learning (RL) algorithms, and poor generalisation across different articulated object categories with substantial shape variations.

{
Recent studies have advanced point-cloud–based perception for infrastructure systems, primarily targeting large-scale semantic segmentation and structural understanding. For example, Lin et al.~\cite{lin2025structure} proposed a structure-oriented loss function to improve semantic segmentation of bridge point clouds, while Jing et al.~\cite{jing2024lightweight} developed a lightweight Transformer-based network for large-scale masonry arch bridge point cloud segmentation. In contrast, our work addresses a different problem setting, focusing on state abstraction as the input to decision-making rather than on maximising perception accuracy.}

More critically, existing methods rarely account for energy consumption as a fundamental design factor, which limits their applicability in sustainable infrastructure operation. Furthermore, comprehensive theoretical guarantees regarding stability and convergence are often absent, raising concerns about their reliability in safety-critical, long-term deployment scenarios \cite{jain2025reinforcement}. Consequently, the suitability of current robotic solutions for infrastructure O\&M remains insufficient.

There is a clear engineering gap in robotic systems for intelligent infrastructure O\&M: existing methods are neither scalable nor energy-efficient enough to reliably handle diverse articulated components in real deployments. This calls for a unified, infrastructure-tailored framework that integrates robust perception, adaptive control, and energy-aware optimisation to enable sustainable, long-term operation.

{\color{black}
To summarise the current research on robotic manipulation for infrastructure operation and maintenance, existing approaches can be broadly characterised as follows:
\begin{itemize}
    \item \textbf{Grasp-centric or object-specific manipulation.}
    Most methods focus on grasping or rely on object-specific articulation and kinematic modelling, limiting scalability across diverse infrastructure components.
    
    \item \textbf{Limited generalisation across articulated objects.}
    Learning-based approaches often struggle to generalise to articulated components with varying geometries and motion constraints.
    
    \item \textbf{Energy treated as a secondary objective.}
    Actuation energy is commonly ignored or incorporated heuristically as a reward penalty rather than explicitly modelled as a constraint.
    
    \item \textbf{Insufficient reliability considerations.}
    Existing methods typically emphasise short-horizon task success, with limited attention to stability and constraint satisfaction in long-term, safety-critical deployment.
\end{itemize}
}

To address the above challenges, this work proposes an articulation-agnostic and energy-aware robotic manipulation framework specifically designed for intelligent infrastructure operation and maintenance tasks, as illustrated in Figure \ref{arch}. The framework integrates part-guided 3D perception with constrained RL for intelligent O\&M control. It enables a single policy to operate across diverse articulated infrastructure components, while explicitly regulating actuation energy consumption to support sustainable, long-term deployment.

{
\color{black}
The main contributions are summarized as follows:
\begin{itemize}
 \item \textbf{Articulation-agnostic manipulation framework for scalable infrastructure maintenance.}
{We develop a part-guided perception and control pipeline that formulates articulated-object manipulation through an articulation-agnostic state abstraction explicitly designed for constrained RL}, enabling a single control policy to handle diverse articulated components in intelligent infrastructure systems, such as access panels, control cabinets, and valve mechanisms. This reduces reliance on object-specific modelling and calibration, and improves adaptability and scalability for large-scale maintenance scenarios.

\item \textbf{Energy-aware constrained reinforcement learning for sustainable infrastructure operation.}
{We propose an energy-aware manipulation strategy by reformulating the control problem as a constrained Markov decision process, in which actuation energy is explicitly modeled as a constraint rather than a heuristic reward penalty.} This directly addresses the common neglect of action cost in task-driven manipulation, enabling more sustainable and cost-efficient operation of robotic systems in infrastructure environments.

\item \textbf{End-to-end multi-objective optimisation with theoretical reliability guarantees.}
{The proposed framework supports end-to-end learning by explicitly coupling perception-driven state abstraction with energy-aware constrained policy optimization within a unified primal--dual formulation, adaptively optimising task performance and energy efficiency.} The optimisation process is accompanied by a theoretical analysis of convergence and stability, which provides insight into the reliability of the proposed approach and supports its applicability to long-term robotic manipulation in infrastructure operation and maintenance scenarios.
\end{itemize}}

The remainder of this paper is organised as follows. 
Section~\ref{sec2} presents a comprehensive literature review of existing studies on robotic manipulation, articulated object interaction, and intelligent infrastructure operation and maintenance. 
Section~\ref{sec3} describes the proposed methodology, including the articulation-agnostic and energy-aware manipulation framework and its theoretical formulation. 
Section~\ref{sec4} details the system implementation, covering the environment setup, control architecture, and training procedures. 
Section~\ref{sec5} provides the experimental evaluation and discusses the performance of the proposed approach under representative infrastructure operation scenarios. 
Finally, Section~\ref{sec6} concludes key findings and future directions.

\section{Related Work}\label{sec2}
\subsection{Robotic Manipulation in Infrastructure Operation and Maintenance}

In recent years, robotic technologies have been increasingly explored for O\&M tasks in civil and infrastructure systems, motivated by the need to improve efficiency and reduce human exposure to hazardous environments. Zhang et al.~\cite{Zhang2019InfrastructureRobot} deployed mobile robots for automated inspection in subway tunnels, demonstrating improved efficiency in structural condition assessment and defect detection. Similarly, Li and Wang~\cite{Li2020BuildingRobot} proposed a robotic platform for building facility inspection, focusing on equipment monitoring, environmental surveillance, and routine condition assessment.

Chen et al.~\cite{Chen2021PipelineRobot} investigated pipeline maintenance robots for municipal water systems and highlighted their feasibility in confined, complex infrastructure environments. Although these systems have shown promising results in inspection and monitoring tasks, most existing robotic applications remain largely limited to passive sensing or simple interaction. Wang et al.~\cite{Wang2022MaintenanceLimitations} reported that current robotic systems still lack the robustness and adaptability required for complex physical manipulation operations, such as opening access doors, manipulating control cabinets, and regulating pipeline valves, thereby limiting their practical deployment in fully automated infrastructure O\&M workflows.

Overall, existing studies indicate that while robotic technologies have advanced infrastructure inspection and monitoring, their capacity to perform reliable, physically intensive manipulation tasks in real-world O\&M scenarios remains insufficient, highlighting the need for more adaptive and function-oriented manipulation frameworks.

\subsection{Articulated Object Manipulation Methods}

Existing approaches of articulated object manipulation can generally be categorised into rule-based and learning-based methods~\cite{Zhang2020ArticulatedSurvey}.

Rule-based approaches rely on explicit reconstruction of object geometry and estimation of kinematic parameters, such as joint axes, rotation centres, and motion constraints. Xu et al.~\cite{Xu2018KinematicsEstimation} proposed an interactive framework for estimating joint parameters in door manipulation tasks, while Zhao et al.~\cite{Zhao2020ModelBased} introduced geometry-driven control strategies based on detailed articulated object models. These methods enable accurate motion planning and precise control under well-defined conditions.

However, their effectiveness strongly depends on extensive prior knowledge, accurate object modelling, and repeated calibration processes. Such requirements are rarely satisfied in real infrastructure environments, where components often exhibit variability, wear, and unknown articulation properties. Consequently, model-based techniques lack scalability and are generally unsuitable for novel or unseen infrastructure components encountered in large-scale O\&M scenarios.

Learning-based manipulation approaches, particularly those based on deep RL and visual perception, have gained increasing attention for articulated object manipulation. Wang et al.~\cite{Wang2019RLManipulation} developed a visual RL framework for door opening, while Liu et al.~\cite{Liu2021Affordance} explored affordance-based learning strategies for manipulation in unstructured environments. These methods demonstrate the potential of learning policies directly from sensory observations, reducing the reliance on explicit object modelling.

Despite their progress, learning-based methods often exhibit low sample efficiency, strong dependence on expert-designed rewards, and limited generalisation capability across different articulated object categories. Li et al.~\cite{Li2022Generalisation} further highlighted that substantial shape variation and structural diversity significantly degrade performance when policies are transferred to unseen objects, thus restricting their applicability in real-world infrastructure settings.

\subsection{Functional Perception and Part-guided Manipulation}

To improve manipulation accuracy and task reliability, recent studies have explored functional perception approaches that focus on identifying and segmenting task-relevant object parts, such as handles, knobs, and control regions. Zhang et al.~\cite{Zhang2021PartSegmentation} proposed a deep part-segmentation framework to guide robotic interaction, while Wang et al.~\cite{Wang2020AffordanceDetection} introduced vision-based affordance detection techniques for identifying feasible manipulation regions.

These approaches have shown effectiveness in strengthening perception-guided manipulation performance, particularly in controlled or household environments. However, Chen et al.~\cite{Chen2022PartLimitations} noted that most existing part-guided methods remain highly object-specific and lack articulation-agnostic generalisation. Their deployment in complex infrastructure environments is therefore constrained, especially when faced with heterogeneous components exhibiting diversified articulation patterns and structural variations.

\subsection{Energy-aware and Constrained Reinforcement Learning}

Traditional RL frameworks for robotic manipulation predominantly focus on maximising task success or cumulative rewards, often neglecting physical costs such as energy consumption and actuator effort. Zhao et al.~\cite{Zhao2019EnergyBlindRL} demonstrated that reward-driven policies frequently result in energy-inefficient behaviour, which is undesirable for long-term autonomous operation in infrastructure systems.

To address this issue, Li et al.~\cite{Li2021ConstrainedRL} and Wang et al.~\cite{Wang2022EnergyAware} introduced Lagrangian-based constrained RL frameworks to incorporate resource and safety constraints into policy optimisation. These approaches enable the regulation of control effort and energy expenditure while maintaining task performance. Nevertheless, most existing studies remain confined to laboratory-scale experiments and generic manipulation scenarios, with limited validation in infrastructure O\&M environments that demand sustained, reliable, and energy-efficient operation.

To the best of our knowledge, we are the first to integrate functional part perception, articulation-agnostic manipulation, and energy-aware constrained optimisation into a unified end-to-end multi-objective RL framework specifically designed for intelligent infrastructure O\&M, enabling sustainable robotic manipulation under operational energy constraints.

\begin{figure*}
\centerline{\includegraphics[width=\linewidth,height=25pc]{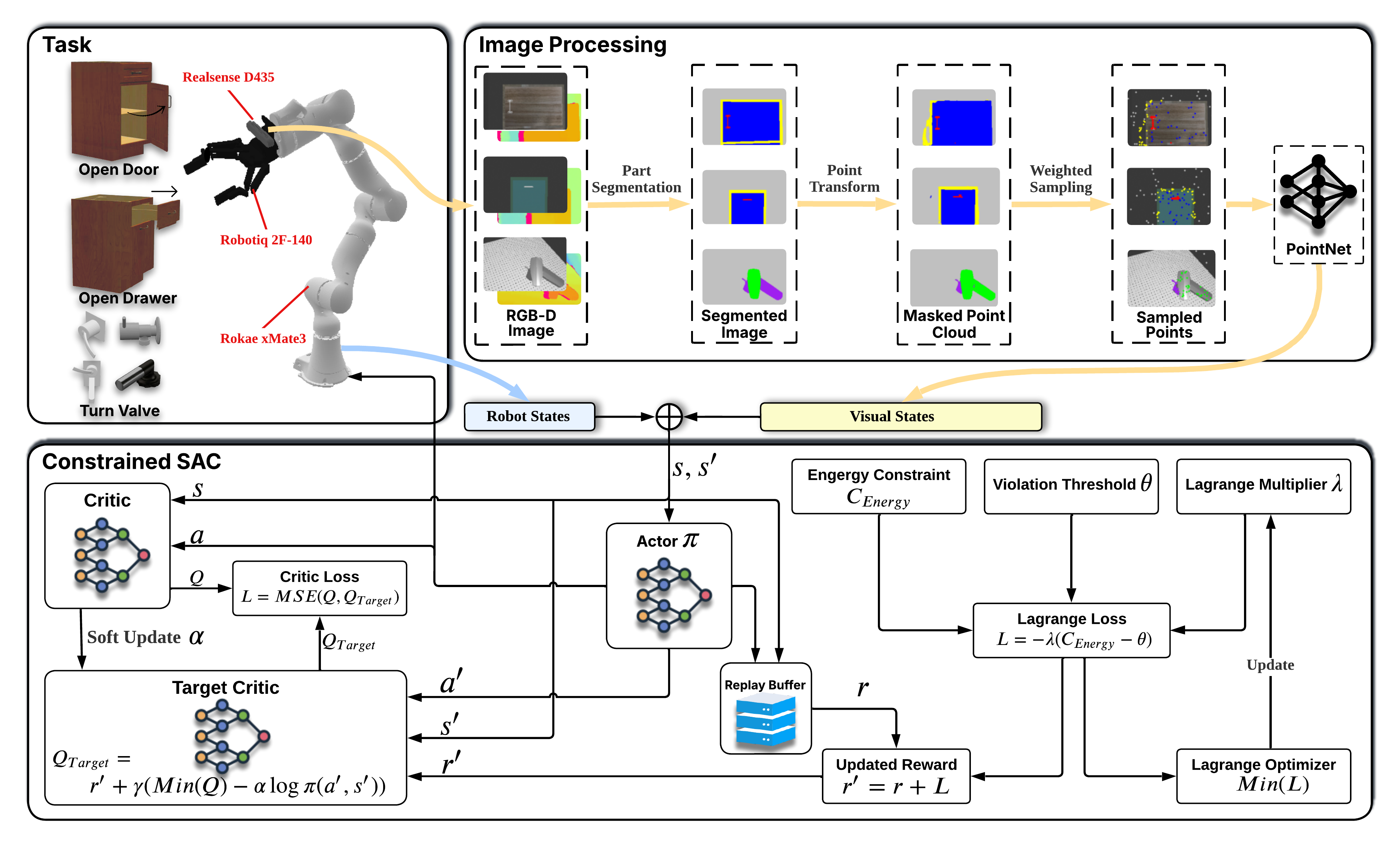}}
\caption{{Overview of the proposed energy-aware, articulation-agnostic, end-to-end RL manipulation framework. We integrate RGB-D part segmentation, masked point-cloud sampling, and PointNet-based visual encoding with robot proprioceptive states. A constrained SAC controller enforces an explicit energy constraint via a Lagrangian mechanism, allowing the policy to achieve effective articulated-object manipulation while regulating actuation energy consumption.
\label{archite}}}
\end{figure*}

Table~\ref{tab:related_work_summary} provides a concise comparison of representative studies. Prior approaches address only a subset of these aspects, whereas the proposed framework jointly integrates articulation-agnostic manipulation, part-guided perception, and energy-aware constrained optimisation within a unified end-to-end formulation.

\begin{table*}[t]
\centering
\color{black} 
\caption{{
Comparison of existing robotic manipulation approaches for infrastructure operation and maintenance.
Here, $\checkmark$ denotes explicit support, $\times$ denotes not supported, and $\sim$ denotes partial or implicit support.
}}
\label{tab:related_work_summary}
\small
\begin{tabular}{@{}p{0.28\linewidth} p{0.32\linewidth} c c c@{}}
\rowcolor{headergray}
\textbf{Category} &
\textbf{Representative Works} &
\textbf{Articulation} &
\textbf{Part-level} &
\textbf{Energy-aware} \\

Infrastructure inspection robots
&
\cite{Zhang2019InfrastructureRobot, Li2020BuildingRobot, Chen2021PipelineRobot}
&
$\times$
&
$\times$
&
$\times$
\\

\rowcolor{rowgray}
Rule-based articulated manipulation
&
\cite{Xu2018KinematicsEstimation, Zhao2020ModelBased}
&
$\times$
&
$\times$
&
$\times$
\\

Learning-based articulated manipulation
&
\cite{Wang2019RLManipulation, Liu2021Affordance, Li2022Generalisation}
&
$\sim$
&
$\sim$
&
$\times$
\\

\rowcolor{rowgray}
Functional / part-guided perception
&
\cite{Zhang2021PartSegmentation, Wang2020AffordanceDetection, Chen2022PartLimitations}
&
$\times$
&
$\checkmark$
&
$\times$
\\

Energy-aware constrained RL
&
\cite{Li2021ConstrainedRL, Wang2022EnergyAware}
&
$\sim$
&
$\times$
&
$\checkmark$
\\

\rowcolor{rowgray}
\textbf{Ours}
&
\textbf{This work}
&
$\checkmark$
&
$\checkmark$
&
$\checkmark$
\\
\end{tabular}
\end{table*}

\section{Methodology}\label{sec3}

\subsection{Overall Framework}
\label{overall}
The overall framework is shown in Figure~\ref{archite}. First, an RGB-D \cite{zhou2021rgb} perception stream performs part segmentation to identify functional regions such as handles, panels, and valves; the segmented regions are lifted into a 3D part-specific point cloud, from which weighted point sampling and a PointNet-based encoder \cite{qi2017pointnet} extract a compact geometric feature that captures local structure and functional affordances in an articulation-agnostic manner. Second, this geometric feature is concatenated with proprioceptive robot states to form a fused observation, which is consumed by an actor--critic control policy: the actor maps the fused representation to continuous joint-level actions, while paired critics estimate the expected task return and the associated energy cost, enabling a single policy to generalise across doors, drawers, and valves with diverse articulation structures. Third, the policy is optimised using a constrained Soft Actor-Critic (SAC) \cite{haarnoja2018soft} scheme, where an energy cost signal is evaluated during interaction and regulated via a Lagrange multiplier \cite{chen2025safety}, allowing the controller to adaptively balance task performance and energy efficiency for long-term infrastructure O\&M.

\subsection{Part-Guided Articulation-Agnostic Perception}
\label{sec:partguided}

We adopt a part-guided 3D perception pipeline to obtain an articulation-agnostic representation of articulated infrastructure components from RGB-D observations. The core idea is to segment objects into functionally meaningful rigid parts (e.g., handles, facades, bases), and to represent each part by a stable 3D point set with learned sampling weights, which can be directly consumed by the subsequent RL policy. Following the part representation paradigm in GAPartNet~\cite{geng2023gapartnet}, we define a part as a rigid segment that shares similar affordances and supports consistent interaction behaviours across different articulated objects. For example, cabinet doors and drawers are decomposed into three parts (handle, facade, and base), while valves consist of a handle and a fixed base. Such an object-irrelevant, affordance-oriented part definition facilitates generalisation across diverse articulated mechanisms.

Given an RGB-D observation at time step $t$,
\begin{equation}
    I_t = \bigl\{ I^{\text{rgb}}_t,\; I^{\text{depth}}_t \bigr\}
\end{equation}
a segmentation network $f_{\text{seg}}$ predicts a pixel-wise part label map
\begin{equation}
    M_t = f_{\text{seg}}\!\bigl(I^{\text{rgb}}_t\bigr)
\end{equation}
where $M_t(u,v) \in \{1,\dots,K\}$ denotes the part index at pixel $(u,v)$. Parts are defined to be agnostic to object identity or category, so that the same label (e.g., ``handle'') is shared by handles on doors, drawers, and valves. To train $f_{\text{seg}}$, we generate a large-scale synthetic dataset of articulated objects in simulation. Objects from multiple categories are placed on a table, and a hand-centric RGB-D sensor mounted on a floating gripper observes them from diverse viewpoints. Ground-truth part masks are obtained directly from the simulator, without manual annotation. Scene-level randomisation is applied to enrich the dataset, including variations in object pose, articulation angle, opening distance, camera pose, and gripper--object distance. 

Using the depth image and calibrated camera intrinsics $K$, pixels belonging to a given part label $\ell$ are lifted to the world frame. Let
\begin{equation}
    \Omega_t^{\ell} = \bigl\{ (u,v) \,\big|\, M_t(u,v) = \ell \bigr\}
\end{equation}
denote the pixel set of part $\ell$. Each pixel $(u,v)\in\Omega_t^{\ell}$ is transformed into a 3D point \cite{li2017improved}
\begin{equation}
    \mathbf{p}_{uv} = D_t(u,v)\, K^{-1}
    \begin{bmatrix}
        u \\ v \\ 1
    \end{bmatrix}
\end{equation}
and the resulting part-specific point set is
\begin{equation}
    \mathcal{P}_t^{\ell} = \bigl\{ \mathbf{p}_{uv} \,\big|\, (u,v)\in\Omega_t^{\ell} \bigr\}
\end{equation}
Since the raw point clouds can be large and imbalanced across parts, downsampling is required to make subsequent RL training efficient while preserving the geometric structure of small but functionally critical parts.

To obtain a compact yet reliable point set for each part, we adopt a frame-consistent uncertainty-aware sampling strategy. The goal is to sample a fixed number $N_s$ of points per part, such that they (i) are more likely to come from reliable segmentation regions and (ii) remain consistent across consecutive frames during manipulation. First, we estimate per-pixel segmentation uncertainty via test-time augmentation \cite{shanmugam2020and} and Monte Carlo dropout \cite{camarasa2020quantitative}. For a given RGB-D input $I_t$, we apply $K$ stochastic augmentations and forward passes through $f_{\text{seg}}$, obtaining softmax probability maps \cite{pearce2021understanding}
\begin{equation}
    \bigl\{ P^{(k)}_t \in \mathbb{R}^{K \times H \times W} \bigr\}_{k=1}^{K}
\end{equation}
The mean class probability at each pixel is
\begin{equation}
    \bar{P}_{t,c}(u,v) = \frac{1}{K} \sum_{k=1}^{K} P^{(k)}_{t,c}(u,v)
\end{equation}
and the predictive entropy \cite{hernandez2016predictive}
\begin{equation}
    U_t(u,v) = - \sum_{c=1}^{K} \bar{P}_{t,c}(u,v)\, \log \bar{P}_{t,c}(u,v)
\end{equation}
serves as an uncertainty map, which is normalised to $[0,1]$. For each part $\ell$, the uncertainty scores of its pixels form a vector
\begin{equation}
    \mathbf{U}_t^{\ell} = \bigl\{ U_t(u,v) \,\big|\, (u,v)\in\Omega_t^{\ell} \bigr\}
\end{equation}
from which uncertainty-based sampling weights
\begin{equation}
    \mathbf{w}^{\ell}_{\text{ua}} = \text{softmax}\!\bigl( \mathbf{U}_t^{\ell} \bigr)
\end{equation}
are derived. These weights bias sampling toward pixels with higher predictive uncertainty, which empirically helps mitigate segmentation errors in the employment process. {In articulated infrastructure components, regions with higher predictive uncertainty frequently correspond to part boundaries, joints, or contact interfaces, which are critical for inferring motion constraints and manipulation direction. Accordingly, uncertainty-aware sampling is used to improve geometric coverage of such informative regions.}

To improve temporal alignment of the sampled points, we incorporate frame-consistent weights. For each part $\ell$, we maintain a queue $Q^{\ell}$ of $N_s$ points sampled in previous frames. At time $t$, the distance between each current point $\mathbf{p}\in\mathcal{P}_t^{\ell}$ and the closest point in $Q^{\ell}$ is computed, yielding a distance vector
\begin{equation}
    \mathbf{d}_t^{\ell} = 
    \Bigl\{ 
        \min_{\mathbf{q} \in Q^{\ell}} \bigl\| \mathbf{p} - \mathbf{q} \bigr\|_2
        \;\Big|\; \mathbf{p} \in \mathcal{P}_t^{\ell}
    \Bigr\}
\end{equation}
Frame-consistent weights are then defined as
\begin{equation}
    \mathbf{w}^{\ell}_{\text{fc}} = 2^{-k_{\text{fc}} \mathbf{d}_t^{\ell}}
\end{equation}
where $k_{\text{fc}} > 0$ is a decay coefficient. {
Here, $\mathbf{d}_t^{\ell} = \{ d_{t,i}^{\ell} \}_{i=1}^{|\mathcal{P}_t^{\ell}|}$ denotes a vector of per-point distances, and the exponential operation is applied in an element-wise manner.} Points closer to previously sampled ones receive larger weights, promoting temporal stability and filtering out noisy or unstable points. The combined sampling weights for part $\ell$ are given by element-wise multiplication
\begin{equation}
    \mathbf{w}^{\ell} = \mathbf{w}^{\ell}_{\text{ua}} \circ \mathbf{w}^{\ell}_{\text{fc}}
\end{equation}
and $N_s$ points are sampled from $\mathcal{P}_t^{\ell}$ according to $\mathbf{w}^{\ell}$. \textcolor{black}{Here, the operator ``$\circ$'' denotes the Hadamard (element-wise) product, which serves as a simple and effective ``AND''-style gating mechanism: points are emphasised only when they are both spatially reliable and temporally stable. This fixed fusion avoids introducing additional learnable parameters, thereby mitigating noise amplification and improving stability in RL.
} The sampled points from all parts are concatenated to form the final point set
\begin{equation}
    \mathcal{P}^*_t = \bigcup_{\ell=1}^{K} \hat{\mathcal{P}}_t^{\ell}
\end{equation}
which is then fed into a PointNet-based encoder to obtain a compact geometric feature for the downstream control policy. {We emphasise that the role of the geometric encoder in this work is not to maximise 3D perception accuracy, but to provide a compact and control-oriented state abstraction for decision-making in reinforcement learning. Accordingly, we adopt PointNet as a lightweight geometric encoder due to its linear computational complexity, deterministic inference latency, and stable optimisation behaviour, which are particularly important for long-horizon closed-loop manipulation in infrastructure operation and maintenance scenarios, when compared with more computationally demanding alternatives such as PointNet++ \cite{qi2017pointnet} and transformer-based \cite{jing2024lightweight} models.}

This part-guided and sampling-aware 3D perception pipeline yields an object-irrelevant, function-centric representation that is robust to segmentation noise, temporally consistent during manipulation, and scalable across different articulated infrastructure components.

\subsection{Energy-Aware Constrained Reinforcement Learning Formulation}

\label{sec:csac_formulation}

Robotic manipulation for infrastructure operation and maintenance inherently involves a trade-off between task effectiveness and energy expenditure \cite{zhao2011joint}.  
High-torque or rapidly changing control actions may complete a task efficiently but lead to undesirable energy usage, which is incompatible with the long-term, cost-sensitive nature of real-world infrastructure O{\&}M.  
To encode this engineering requirement directly into the decision-making process, we formulate the manipulation problem as a \emph{Constrained Markov Decision Process} (CMDP) \cite{altman2021constrained}.

Formally, we define the CMDP as
\begin{equation}
    \mathcal{M} = (\mathcal{S}, \mathcal{A}, P, r, c, \gamma)
\end{equation}
where $\mathcal{S}$ is the state space (including the articulation-agnostic geometric feature $\mathbf{g}_t$ and proprioception),  
$\mathcal{A}$ is the continuous action space,  
$P$ is the transition kernel,  
$r$ denotes task rewards that measure manipulation performance,  
$c$ denotes the instantaneous energy consumption,  
and $\gamma \in (0,1)$ is the discount factor \cite{tessler2018reward}.

The control objective \cite{achiam2017constrained} is to maximise the expected task return while ensuring that the long-term expected energy consumption does not exceed a prescribed engineering budget $\theta$:
\begin{equation}
    \max_{\pi} \; J_r(\pi)
    \quad \text{s.t.} \quad
    J_c(\pi) \le \theta 
    \label{eq:cmdp}
\end{equation}
where
\begin{align}
    J_r(\pi) &= 
    \mathbb{E}_{\pi}\!\left[ 
        \sum_{t=0}^{\infty} \gamma^t r(s_t,a_t)
    \right] \\
    J_c(\pi) &= 
    \mathbb{E}_{\pi}\!\left[ 
        \sum_{t=0}^{\infty} \gamma^t c(s_t,a_t)
    \right]
\end{align}
The constraint $J_c(\pi)\le\theta$ ensures that the learned controller respects a predefined energy budget, enabling sustain\-able operation across long-horizon O{\&}M tasks.

To solve~\eqref{eq:cmdp}, we adopt a Lagrangian relaxation, transforming the constrained optimisation into an unconstrained saddle-point problem \cite{huang2023unified}:
\begin{equation}
    \mathcal{L}(\pi,\lambda)
    = 
    J_r(\pi)
    - \lambda \bigl( J_c(\pi) - \theta \bigr)
    \label{eq:lagrangian}
\end{equation}
where $\lambda\ge 0$ is the Lagrange multiplier that adaptively adjusts the strength of the energy penalty.  
Unlike heuristic reward shaping, this formulation provides a principled mechanism for enforcing long-term energy constraints and is particularly suitable for engineering control systems where operating costs must be regulated explicitly.

We instantiate this Lagrangian formulation using a constrained SAC, employing separate critics for reward and energy cost,
\begin{equation}
Q_r(s,a) \quad \text{and} \quad Q_c(s,a)
\end{equation}

For a transition $(s_t, a_t, r_t, c_t, s_{t+1})$ sampled from the replay buffer $D$, 
the reward critic uses the entropy-regularised Bellman backup \cite{sutton1999reinforcement}:
\begin{equation}
    y_r =
    r_t
    +
    \gamma \,
    \mathbb{E}_{a' \sim \pi(\cdot \mid s_{t+1})}
    \bigl[
        Q_r^{\mathrm{tgt}}(s_{t+1},a')
        -
        \alpha \log \pi(a' \mid s_{t+1})
    \bigr]
    \label{eq:reward_backup}
\end{equation}
while the cost critic uses the standard discounted backup:
\begin{equation}
    y_c =
    c_t
    +
    \gamma \,
    \mathbb{E}_{a' \sim \pi(\cdot \mid s_{t+1})}
    \bigl[
        Q_c^{\mathrm{tgt}}(s_{t+1},a')
    \bigr]
    \label{eq:cost_backup}
\end{equation}
The reward and cost critics are then updated by minimising their respective Bellman residuals:
\begin{align}
    \mathcal{J}_{Q_r}
    &=
    \mathbb{E}_{(s_t,a_t,r_t,c_t,s_{t+1})\sim D}
    \bigl[
        \bigl( Q_r(s_t,a_t) - y_r \bigr)^2
    \bigr]
    \\
    \mathcal{J}_{Q_c}
    &=
    \mathbb{E}_{(s_t,a_t,r_t,c_t,s_{t+1})\sim D}
    \bigl[
        \bigl( Q_c(s_t,a_t) - y_c \bigr)^2
    \bigr]
\end{align}

Given replay samples, the actor is updated by minimising
\begin{equation}
    \mathcal{J}_{\pi}
    =
    \mathbb{E}_{s_t \sim D,\; a_t \sim \pi(\cdot \mid s_t)}
    \Bigl[
        \alpha \log \pi(a_t \mid s_t)
        - Q_r(s_t,a_t)
        + \lambda\, Q_c(s_t,a_t)
    \Bigr]
    \label{eq:actor_loss}
\end{equation}
where $\alpha$ is the entropy temperature.  
The term $\lambda\, Q_c(s_t,a_t)$ explicitly penalises energy-inefficient actions, guiding the policy toward solutions that balance manipulation performance and sustainable energy usage.  
Crucially, the articulation-agnostic geometric feature $\mathbf{g}_t$ produced by the perception module is embedded directly into the state representation $s_t$, enabling the constrained policy to generalise across different articulated infrastructure components without requiring object-specific models or kinematic priors.

This formulation provides the theoretical foundation for the proposed energy-aware manipulation framework.  
The optimisation dynamics and stability properties of the primal--dual updates are discussed in Section~\ref{sec:optimization}.

\subsection{Optimisation Strategy and Theoretical Properties}
\label{sec:optimization}

The constrained SAC framework in Section~\ref{sec:csac_formulation} is optimised via a primal-dual procedure that alternates between policy improvement, value estimation, and adaptive constraint regulation. This section outlines the optimisation strategy and highlights theoretical properties that support stable and reliable behaviour in long-horizon infrastructure O{\&}M scenarios.

To enforce the long-term energy constraint $J_c(\pi)\le\theta$, the Lagrange multiplier $\lambda$ is updated through a dual ascent step:
\begin{equation}
    \lambda \leftarrow 
    \max\!\Bigl( 0,\,
        \lambda + \eta_\lambda \bigl( J_c(\pi) - \theta \bigr)
    \Bigr)
    \label{eq:lambda_update}
\end{equation}
where $\eta_\lambda>0$ is the dual learning rate.  
When the expected energy consumption exceeds the budget $\theta$, $\lambda$ increases, amplifying the penalty on high-energy actions in the actor objective~\eqref{eq:actor_loss};  
conversely, when $J_c(\pi)<\theta$, $\lambda$ decreases, allowing the policy to place more emphasis on task performance.  
This adaptive mechanism steers the optimisation towards a feasible solution of the constrained problem~\eqref{eq:cmdp} and implements a standard dual ascent scheme for CMDPs.

Both the reward critic $Q_r$ and the cost critic $Q_c$ employ target networks with Polyak averaging to stabilise temporal-difference estimation:
\begin{equation}
    \phi_{\text{target}}
    \leftarrow
    \tau \phi
    +
    (1-\tau)\phi_{\text{target}}
    \quad 0 < \tau \ll 1
    \label{eq:polyak}
\end{equation}
where $\phi$ denotes critic parameters.  
Soft target updates reduce the variance and drift of bootstrapped value targets, which in turn prevents oscillatory behaviour in policy updates and mitigates the risk of policy collapse.  
In the context of infrastructure O{\&}M, such stabilisation is crucial, as large fluctuations in the learned policy could translate into unsafe or inefficient behaviours during real deployment.

We provide a Lyapunov-style \cite{khalil2009lyapunov} argument to characterise the convergence behaviour of the proposed primal--dual updates.

\begin{definition}[Energy-feasible policy]
A policy $\pi$ is said to be \emph{energy-feasible} if its long-term expected energy consumption satisfies
\begin{equation}
    J_c(\pi) \;\le\; \theta
\end{equation}
\end{definition}

Let $\pi^\star$ denote an optimal energy-feasible policy for the CMDP~\eqref{eq:cmdp}.  
Consider the following Lyapunov candidate over the iterates $(\pi_k,\lambda_k)$:
\begin{equation}
    \mathcal{V}_k 
    =
    \bigl( J_r(\pi^\star) - J_r(\pi_k) \bigr)
    +
    \frac{1}{2\rho}
    \Bigl( \bigl[J_c(\pi_k) - \theta\bigr]_+ \Bigr)^2
    \label{eq:lyapunov}
\end{equation}
where $\rho>0$ is a weighting parameter and $[x]_+=\max(x,0)$.

\begin{proposition}[Lyapunov decrease and asymptotic constraint satisfaction]
\label{prop:lyapunov}
Suppose that $J_r(\pi)$ and $J_c(\pi)$ are bounded and Lipschitz-continuous with respect to the policy parameters, and that the actor and dual step sizes $\eta_\pi$ and $\eta_\lambda$ are chosen sufficiently small.  
Then the stochastic primal--dual updates associated with~\eqref{eq:actor_loss} and~\eqref{eq:lambda_update} admit the Lyapunov function $\mathcal{V}_k$ in~\eqref{eq:lyapunov}, and there exists a constant $C>0$ such that
\begin{equation}
    \frac{1}{T} \sum_{k=0}^{T-1}
    \bigl( J_c(\pi_k) - \theta \bigr)_+
    \;\le\;
    \frac{C}{\sqrt{T}}
    \quad \text{for all } T \ge 1
    \label{eq:constraint_bound}
\end{equation}
In particular, the average constraint violation vanishes as $T\to\infty$, and any limit point of $\{\pi_k\}$ is energy-feasible.
\end{proposition}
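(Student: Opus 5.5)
The plan follows the standard online primal--dual analysis for constrained MDPs. We read the actor step on~\eqref{eq:actor_loss} as an approximate stochastic gradient ascent step on $\mathcal{L}(\cdot,\lambda_k)$ in~\eqref{eq:lagrangian}, with step size $\eta_\pi$. The dual step~\eqref{eq:lambda_update} is a projected gradient descent step on $\lambda\mapsto\mathcal{L}(\pi_{k},\lambda)$. We then run two coupled estimates, a dual one-step inequality and a primal ascent inequality, and combine them into a drift bound for a potential that dominates $\mathcal{V}_k$ in~\eqref{eq:lyapunov}.

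\textbf{Dual step.} Fix any comparison multiplier $\mu\ge 0$. Nonexpansiveness of the projection onto $[0,\infty)$ gives
\begin{equation}
(\lambda_{k+1}-\mu)^2 \le (\lambda_k-\mu)^2 + 2\eta_\lambda(\lambda_k-\mu)\bigl(J_c(\pi_k)-\theta\bigr) + \eta_\lambda^2 G^2 ,
\end{equation}
where $G$ bounds $|J_c-\theta|$. This bound exists because $J_c$ is bounded by hypothesis.

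\textbf{Primal step.} Lipschitz continuity of $J_r$ and $J_c$, used as smoothness of the Lagrangian in the policy parameters, together with a small $\eta_\pi$ gives a standard ascent lemma. It states that $\mathcal{L}(\pi_{k+1},\lambda_k)\ge\mathcal{L}(\pi_k,\lambda_k)+\eta_\pi\|\nabla\|^2/2-O(\eta_\pi^2)$, up to a martingale noise term with zero conditional mean.

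\textbf{Comparison with $\pi^\star$.} To relate $\pi_k$ to $\pi^\star$, we assume a no-duality-gap or approximate-optimality condition. Concretely, the actor (SAC) step attains
\begin{equation}
\mathcal{L}(\pi_{k},\lambda_k)\ge \mathcal{L}(\pi^\star,\lambda_k)-\epsilon_k ,
\end{equation}
with $\sum_k\epsilon_k=O(\sqrt T)$. This is the regime of CMDP strong duality (Altman; Paternain et al.). Since $J_c(\pi^\star)\le\theta$ and $\lambda_k\ge0$, this gives $J_r(\pi^\star)-J_r(\pi_k)\le \epsilon_k-\lambda_k(J_c(\pi_k)-\theta)$.

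\textbf{Telescoping.} Add this to the dual inequality with a choice of $\mu$ and sum over $k=0,\dots,T-1$. The cross terms $\lambda_k(J_c(\pi_k)-\theta)$ cancel, and the telescoping sum leaves
\begin{equation}
\sum_{k}\bigl(J_r(\pi^\star)-J_r(\pi_k)\bigr)+\mu\sum_k\bigl(J_c(\pi_k)-\theta\bigr)\le \frac{\mu^2}{2\eta_\lambda}+\frac{\eta_\lambda G^2T}{2}+\sum_k\epsilon_k .
\end{equation}
Take $\mu=\lambda^\star+1$, where $\lambda^\star$ is an optimal dual variable. The saddle-point property gives $J_r(\pi^\star)-J_r(\pi_k)\ge -\lambda^\star(J_c(\pi_k)-\theta)$. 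Substituting, the left-hand side is at least $\sum_k (J_c(\pi_k)-\theta)$.

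\textbf{Converting to the positive part.} To get the positive-part average in~\eqref{eq:constraint_bound}, we repeat the argument with $\mu$ replaced by $\mu\,\mathbf{1}\{\cdot\}$, using the standard positive-part variant of the telescoping argument. Choosing $\eta_\lambda\propto 1/\sqrt T$ then yields the bound $C/\sqrt T$. Along the same summation, the weighted combination of the primal gap and the squared violation, i.e.\ $\mathcal{V}_k$ with $\rho$ tied to $\eta_\lambda$, has nonpositive expected drift up to $O(\eta^2)$ terms. This is the sense in which $\mathcal{V}_k$ serves as a Lyapunov function.

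\textbf{Limit points.} Continuity of $J_c$ and the vanishing average violation force any limit point $\bar\pi$ to satisfy $(J_c(\bar\pi)-\theta)_+=0$. If the iterates converge, Cesàro averaging gives this directly. Otherwise we argue along subsequences on which the violation is small, which the $O(1/\sqrt T)$ average guarantees exist infinitely often. On this path, "any limit point" in the statement has to be read in this subsequential sense.

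\textbf{Main obstacle.} The hardest step is the comparison inequality between $\mathcal{L}(\pi_k,\lambda_k)$ and $\mathcal{L}(\pi^\star,\lambda_k)$. The Lipschitz and boundedness hypotheses alone do not give it, because $J_r$ and $J_c$ are nonconvex in the policy parameters. I would first try to justify it through CMDP strong duality in occupancy-measure space, together with the SAC soft policy-improvement guarantee treated as an approximate best-response oracle with summable error. If that fails, the claim has to be weakened to stationarity of the Lagrangian. In that case the violation bound~\eqref{eq:constraint_bound} can still be obtained from the dual telescoping alone, provided $\lambda_k$ stays bounded; under Slater's condition we would show this from the $J_r$ bound.
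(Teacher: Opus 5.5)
Your argument breaks at the step ``Converting to the positive part''. Up to that point it is sound, given the best-response and saddle-point assumptions you add. You are right that boundedness and Lipschitz continuity alone cannot supply the comparison with $\pi^\star$. With $\mu=\lambda^\star+1$, the telescoping bounds $\sum_{k<T}\bigl(J_c(\pi_k)-\theta\bigr)$ by $O(1/\eta_\lambda+\eta_\lambda T)+\sum_k\epsilon_k$. That sum lets positive and negative violations cancel, so it only controls the averaged (mixture) policy. Replacing $\mu$ by $\mu_k=\mu\,\mathbf{1}\{J_c(\pi_k)>\theta\}$ does make the saddle-point step give a lower bound of $\sum_k(J_c(\pi_k)-\theta)_+$. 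However, the distance terms then stop telescoping: each switch of the indicator leaves a residual $\mu|2\lambda_{k+1}-\mu|/(2\eta_\lambda)$, and there can be order $T$ switches.

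No repair is possible under your assumptions. Take a single state whose rewarding action also costs one unit, so that (after normalising by $1-\gamma$) $J_r(\pi)=J_c(\pi)=x\in[0,1]$, the probability of that action, and set $\theta=1/2$. Then $x^\star=1/2$ and $\lambda^\star=1$, and Slater's condition and strong duality hold. Since $\mathcal{L}(x,\lambda)=x(1-\lambda)+\lambda/2$, the exact best response ($\epsilon_k=0$) is $x_k=1$ for $\lambda_k<1$ and $x_k=0$ for $\lambda_k>1$. Under \eqref{eq:lambda_update}, $\lambda_k$ therefore falls into a period-two orbit straddling $1$ (for every $\eta_\lambda>0$ and generic $\lambda_0$), and $J_c(\pi_k)-\theta$ alternates between $+1/2$ and $-1/2$. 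Hence $\frac1T\sum_{k<T}(J_c(\pi_k)-\theta)\to0$ while $\frac1T\sum_{k<T}(J_c(\pi_k)-\theta)_+\to1/4$, and $x=1$ is an infeasible limit point.

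The same example disposes of your remaining steps. Your fallback with bounded $\lambda_k$ only gives $\sum_{k<T}(J_c(\pi_k)-\theta)\le(\lambda_T-\lambda_0)/\eta_\lambda$, again without the positive part, and here $\lambda_k$ is bounded. Your claim that $\mathcal{V}_k$ has nonpositive drift up to $O(\eta^2)$ is asserted rather than derived: nothing in the telescoping controls $\mathcal{V}_{k+1}-\mathcal{V}_k$. In the example, $\mathcal{V}_k$ rises by $|1-1/(8\rho)|$ on every other step; for $\rho=1/8$ it is constant while the violation persists, so drift control alone would not suffice anyway. The paper's route does not fill the gap either. It asserts the drift inequality \eqref{eq:lyapunov_decrease} for $\mathcal{V}_k$ and sums it, but the term $-\eta_\lambda(J_c(\pi_k)-\theta)_+^2$ cannot arise from expanding $\mathcal{V}_k$, which depends on $\pi_k$ alone.

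What your argument does prove, under your added assumptions and with $\eta_\lambda\propto1/\sqrt T$, is the ergodic bound $\bigl[\frac1T\sum_{k<T}(J_c(\pi_k)-\theta)\bigr]_+\le C/\sqrt T$. This is near-feasibility of the policy that runs a uniformly drawn $\pi_k$. Obtaining the positive-part average \eqref{eq:constraint_bound}, or feasibility of every limit point, requires one of two things. Either modify the algorithm, for instance with a dual step driven by the clipped violation $(J_c(\pi_k)-\theta)_+$, or add a hypothesis that forces last-iterate convergence.
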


\noindent\textit{Proof (sketch).}
Under the stated regularity assumptions, the actor and dual updates implement a stochastic gradient descent--ascent step on the Lagrangian $\mathcal{L}(\pi,\lambda)$ in~\eqref{eq:lagrangian}.  
Expanding $\mathcal{V}_{k+1}-\mathcal{V}_k$ along the update directions and taking expectation yields
\begin{equation}
\begin{aligned}
    \mathbb{E}\!\left[ \mathcal{V}_{k+1} - \mathcal{V}_k \right]
    \le\;&
    - \eta_\pi 
        \bigl\| \nabla_\pi \mathcal{L}(\pi_k,\lambda_k) \bigr\|^2
    - \eta_\lambda 
        \bigl( J_c(\pi_k) - \theta \bigr)_+^2
    \\
    &
    + \mathcal{O}(\eta_\pi^2,\, \eta_\lambda^2)
\end{aligned}
\label{eq:lyapunov_decrease}
\end{equation}

showing that $\mathcal{V}_k$ is a supermartingale up to higher-order terms.  
Summing~\eqref{eq:lyapunov_decrease} over $k$ and applying standard stochastic approximation arguments leads to the bound~\eqref{eq:constraint_bound} and to the vanishing of both the primal gradient norm and the constraint violation in the limit.

Taken together, the dual ascent update~\eqref{eq:lambda_update}, 
the stabilised critic dynamics~\eqref{eq:polyak}, 
and the Lyapunov-style decrease property~\eqref{eq:lyapunov_decrease} 
provide a theoretical underpinning for the proposed energy-aware control scheme.  
They indicate that, under mild regularity conditions, the combined primal--dual updates 
converge towards a stationary solution of the CMDP with vanishing constraint violation 
and bounded energy consumption, thereby offering strong reliability guarantees for long-term deployment 
in intelligent infrastructure operation and maintenance.

\section{Implementation}\label{sec4}

\subsection{Environment Setup}
\label{sec:env_setup}

The proposed framework is implemented and evaluated on a 7-DOF manipulator platform and a physics-based simulation environment. The physical setup consists of an optical table, a ROKAE xMate3 Pro robot arm \cite{rokae2020xmate3pro} equipped with a Robotiq 2F-140 two-finger gripper \cite{robotiq2f140} mounted at the end-effector, and an Intel RealSense D435 RGB-D camera \cite{intelrealsenseD435}. The camera is rigidly attached in a hand-centric configuration, providing depth and colour observations of the manipulated object and the end-effector. {During experiments, both the robotic manipulator and the articulated objects are placed on the optical table, and the objects are randomly positioned for each trial. The training and testing environments share the same setup and configurations to ensure consistent and fair evaluation.}

For training and data collection, we use a physics simulator to emulate articulated object manipulation. {As representative objects for O\&M tasks, we select doors, drawers, and valves as the articulated objects from the publicly available PartNet-Mobility dataset~\cite{Xiang_2020_SAPIEN}.} These assets are used to generate synthetic RGB-D observations and part annotations for training the perception module. In the RL stage, we instantiate 4 object instances from each category as training environments. 

The visual observation at each time step consists of a single hand-centric RGB-D frame with a resolution of $144 \times 256$ pixels. To enable fast segmentation inference compatible with RL training, we adopt a streamlined encoder–decoder architecture inspired by U-Net~\cite{ronneberger2015u}, using MobileNetV2~\cite{sandler2018mobilenetv2} as the encoder backbone to balance feature expressiveness and computational efficiency.

All experiments are conducted on a workstation equipped with an i9-12950HX CPU  and an NVIDIA GeForce RTX A4500 GPU running Ubuntu 24.04. The physics simulation and all models are implemented in Python~3.8 with PyTorch~2.4.1 \cite{paszke2019pytorch} and gym~0.21.0.

\subsection{RL MDP Details}
\label{sec:mdp_details}

The manipulation task is instantiated as the CMDP described in Section~\ref{sec:csac_formulation}. 
Here, we provide the explicit CMDP components used during training.

\textbf{State Space.}
The policy receives the fused observation
\begin{equation}
s_t = \bigl[\, \mathbf{g}_t,\; q_t,\; x^{\text{ee}}_t \,\bigr]
\end{equation}
where $\mathbf{g}_t$ is the geometric feature produced by the part-guided perception module,  
$q_t$ is the position of 7 joints,  
and $x^{\text{ee}}_t$ denotes the end-effector pose. {Here, proprioception refers to the internal kinematic state of the robot, including joint-level configuration and the Cartesian pose of the end-effector.}

\textbf{Action Space.}
The action is defined as the robot target joint positions together with the gripper finger position:
\begin{equation}
    a_t = \bigl[\, q^{\text{tar}}_t,\; f_t \,\bigr]
\end{equation}
where $q^{\text{tar}}_t$ denotes the target joint positions of the 7-DOF arm, and $f_t$ is the target position of the gripper finger.  
{All action dimensions are normalised to a fixed range before being sent to the low-level Proportional–Integral–Derivative (PID) controller \cite{johnson2005pid}.}

\textbf{Task Reward.}
To guide the policy towards physically meaningful articulated manipulation, the task reward combines several shaping terms that reflect different aspects of a successful operation.  
At each time step $t$, the scalar reward is defined as
\begin{equation}
    r_t
    =
    \alpha_{\text{reach}}\, r_t^{\text{reach}}
    + \alpha_{\text{align}}\, r_t^{\text{align}}
    + \alpha_{\text{goal}}\, r_t^{\text{goal}}
    + \alpha_{\text{vis}}\, r_t^{\text{vis}}
    + \alpha_{\text{grasp}}\, r_t^{\text{grasp}}
\end{equation}
where $r_t^{\text{reach}}$ encourages the end-effector to move closer to the target movable part, 
$r_t^{\text{align}}$ promotes motion along the desired opening or turning direction, 
$r_t^{\text{goal}}$ rewards progress towards the final task configuration (e.g., fully opened door or valve angle), 
$r_t^{\text{vis}}$ favours keeping the manipulated part within the camera field of view, 
and $r_t^{\text{grasp}}$ reflects the quality of the physical contact when grasping is required.  
The coefficients $\alpha_{\text{reach}}$, $\alpha_{\text{align}}$, $\alpha_{\text{goal}}$, $\alpha_{\text{vis}}$, and $\alpha_{\text{grasp}}$ weight the relative importance of these components.  
For the \emph{TurnValve} scenario, the explicit grasping component is disabled by setting $\alpha_{\text{grasp}} = 0$, since the valve can be operated without a dedicated grasping phase.

\textbf{Energy Cost.}
To quantify the actuation effort, we define the instantaneous energy-related cost at time $t$ as
\begin{equation}
    c_t
    =
    \beta_{\text{eng}} \,\bigl\| \dot{q}_t \bigr\|_2^2
\end{equation}
where$\beta_{\text{eng}}>0$ is a scaling coefficient.  
{This term serves as a smooth proxy for actuation effort and is adopted for its numerical stability and suitability for long-horizon constrained policy optimisation.} It is used as the cost function $c(s_t,a_t)$ in the CMDP formulation, whose discounted expectation $J_c(\pi)$ is constrained to satisfy the energy budget $J_c(\pi)\le\theta$ as defined in Section~\ref{sec:csac_formulation}. \textcolor{black}{Note that this cost is a scaled proxy, and the absolute cost values are not directly comparable across different robotic actuation platforms without appropriate calibration. Accordingly, the main claims in this work are based on comparative evaluations across methods under the same experimental setup.}


\subsection{Model Architecture Details}
\label{sec:architecture_details}

The full control pipeline follows the constrained SAC design in Section~\ref{sec:csac_formulation}, with lightweight MLP-based networks for real-time control. The main architectural components are summarised in Table~\ref{tab:arch}. {We adopt the standard SAC automatic entropy tuning mechanism, which adjusts the entropy temperature online to maintain a target entropy and balance exploration and exploitation without manual tuning.}

\renewcommand{\arraystretch}{1.25} 

\begin{table*}[t]
\centering
\caption{Network architecture configuration.}
\label{tab:arch}

\begin{tabular}{@{}p{0.28\linewidth} p{0.68\linewidth}@{}}
\rowcolor{headergray}
\textbf{Component} & \textbf{Configuration} \\

Geometric encoder &
PointNet: shared MLP (64, 128, 256) $\rightarrow$ max pooling $\rightarrow$ FC 256 $\rightarrow$ 128 (128-d feature $\mathbf{g}_t$) \\
\rowcolor{rowgray}
Actor network &
2-layer MLP (256, 256); outputs Gaussian mean and log-variance with $\tanh$-squashed actions \\

Reward critic $Q_r(s,a)$ &
Twin 2-layer MLPs (256, 256); separate target networks updated via Polyak averaging \\ 
\rowcolor{rowgray}
Cost critic $Q_c(s,a)$ &
Twin 2-layer MLPs (256, 256); same structure and update scheme as $Q_r$ \\

Entropy coefficient &
Automatic entropy tuning (SAC-style temperature adaptation) \\
\rowcolor{rowgray}
Lagrange multiplier $\lambda$ &
Initial value $\lambda_0 = 1.0$; updated online via dual ascent as in Eq.~\eqref{eq:lambda_update} \\

\end{tabular}

\begin{tablenotes}
\item Notes: All MLPs use ReLU nonlinearities and layer normalisation where appropriate. 
Actor and critic architectures are kept lightweight to enable real-time deployment on infrastructure O\&M platforms.
\end{tablenotes}

\end{table*}


\subsection{Training Schema}
\label{sec:training_schema}
 Table \ref{tab:hyper} lists Constrained SAC hyperparameters shared
across methods.
The complete training pipeline follows the Constrained SAC optimisation strategy defined in Section~\ref{sec:optimization}. {We set $\alpha_{\text{reach}}=2.0$, $\alpha_{\text{align}}=20.0$, $\alpha_{\text{goal}}=10.0$, $\alpha_{\text{vis}}=1.0$, and $\alpha_{\text{grasp}}=1.0$ to balance the numerical scales of heterogeneous reward components, whose raw magnitudes differ substantially across distance-, angle-, and stage-based terms. \textcolor{black}{We set $N_s = 32$ points per functional part, which provides sufficient geometric coverage of each part for PointNet while maintaining computational efficiency.} We set $K = 10$ as it provides a stable estimate of predictive uncertainty with negligible computational overhead, and $k_{\text{fc}} = 40$ as it strongly favours temporally consistent points across consecutive frames, improving stability during manipulation.
Higher weights are assigned to task-critical components, such as alignment and goal completion, to facilitate effective credit assignment during learning, while auxiliary terms are intentionally assigned lower weights. The part segmentation network is trained offline using synthetic annotations and remains fixed during policy learning.
During RL training, gradients are not back-propagated through the perception module.}
{Algorithm~\ref{alg:methodology_training} summarises the end-to-end training procedure of the proposed energy-aware framework. In implementation, all components are executed within a single training loop. At each time step, the perception module provides a geometry-based state representation, the actor outputs control actions for the robotic manipulator, and the collected reward and energy cost are used to update the policy, critics, and Lagrange multiplier in a fully end-to-end manner.}

{Importantly, all parameters are kept fixed across tasks and environments, and no task-specific tuning is performed. The architectures and hyperparameters follow standard RL practices, ensuring stable training and a fair evaluation of the proposed formulation.} This training scheme jointly optimises task performance and long-term energy efficiency, enabling reliable and sustainable manipulation across diverse articulated infrastructure components.

\begin{algorithm}
\caption{End-to-End RL Training of the Proposed Energy-Aware Articulation-Agnostic Manipulation Framework}
\label{alg:methodology_training}
\begin{algorithmic}[1]
\State \textbf{Input:} segmentation network $f_{\text{seg}}$; geometric encoder $f_{\text{geo}}$;
actor $\pi_\theta$; reward critic $Q_r$; cost critic $Q_c$;
Lagrange multiplier $\lambda$; replay buffer $\mathcal{D}$.

\State \textbf{Initialize} target critics $Q_r^{\text{tgt}}$, $Q_c^{\text{tgt}}$.

\For{each training iteration}
    \State \textbf{(Perception)} Capture RGB-D observation $I_t$.
    \State Predict part segmentation mask: $M_t = f_{\text{seg}}(I^{\text{rgb}}_t)$.
    \State Lift masked depth pixels to 3D point sets $\{ \mathcal{P}_t^\ell \}$.
    \State Compute uncertainty and frame-consistency weights.
    \State Sample per-part weighted points $\widehat{\mathcal{P}}_t^\ell$ and fuse:
    \[
        \mathcal{P}^*_t = \bigcup_{\ell} \widehat{\mathcal{P}}_t^\ell .
    \]
    \State Extract geometric feature $\mathbf{g}_t = f_{\text{geo}}(\mathcal{P}^*_t)$.
    \State Form policy observation $s_t = \bigl(\mathbf{g}_t,\text{proprioception}\bigr)$.
    
    \State \textbf{(Action Selection)} Sample action $a_t \sim \pi_\theta(a|s_t)$.
    \State Execute $a_t$, and observe $(r_t, c_t, s_{t+1})$.
    \State Store transition $(s_t,a_t,r_t,c_t,s_{t+1})$ into $\mathcal{D}$.

    \State \textbf{(Critic Updates)} Sample minibatch from $\mathcal{D}$.
    \State Compute reward target $y_r$ using Eq.~\eqref{eq:reward_backup}.
    \State Compute cost target $y_c$ using Eq.~\eqref{eq:cost_backup}.
    \State Update $Q_r$  by minimising Bellman residual:
\[
\mathcal{J}_{Q_r}
=
\mathbb{E}\bigl[
    \bigl(Q_r(s_t,a_t) - y_r\bigr)^2
\bigr].
\]

\State Update $Q_c$ by minimising Bellman residual:
\[
\mathcal{J}_{Q_c}
=
\mathbb{E}\bigl[
    \bigl(Q_c(s_t,a_t) - y_c\bigr)^2
\bigr].
\]
    \State Update target networks:
    \[
    \phi_{\mathrm{tgt}} \leftarrow \tau \phi + (1-\tau)\phi_{\mathrm{tgt}}.
    \]

    \State \textbf{(Actor Update)} Update policy via
    \[
    \nabla_\theta \mathcal{J}_\pi
    =
    \nabla_\theta 
    \mathbb{E}_{s\sim D,\,a\sim\pi_\theta}
    \Bigl[
        \alpha \log\pi_\theta(a|s)
        - Q_r(s,a)
        + \lambda Q_c(s,a)
    \Bigr].
    \]

    \State \textbf{(Dual Update)} Update $\lambda$ using minibatch estimate $\hat{J}_c$:
    \[
    \lambda \leftarrow 
        \max\!\Bigl(0,\;
        \lambda + \eta_\lambda(\hat{J}_c - \theta)
    \Bigr).
    \]

\EndFor
\end{algorithmic}
\end{algorithm}

\begin{table}[t]
\centering
\caption{Constrained SAC training hyperparameters.}
\label{tab:hyper}
\begin{tabular}{@{}p{0.5\linewidth}p{0.45\linewidth}@{}}
\rowcolor{headergray}
\textbf{Hyperparameter} & \textbf{Value} \\

Policy Learning Rate & 4e-4 \\
\rowcolor{rowgray}
Value Learning Rate & 4e-4 \\
Lagrange Learning Rate & 1e-3 \\
\rowcolor{rowgray}
Initial Lagrange Multiplier $\lambda$ & 1.0 \\
Initial Violation Threshold $\theta$ & 0.06 \\
\rowcolor{rowgray}
Batch Size & 256 \\
Buffer Size & 100{,}000 \\
\rowcolor{rowgray}
Discount Factor $\gamma$ & 0.99 \\
Soft Update Coefficient $\tau$ & 0.005 \\
\rowcolor{rowgray}
Optimizer & Adam \\
\end{tabular}

\end{table}

\section{Experimental Evaluation}
\label{sec5}

\subsection{Evaluation Setup}

\textbf{Research Questions}
Our experimental evaluation is organized to address the following research questions (RQs):
\begin{itemize}
    \item \textbf{RQ1 (Energy efficiency).} Does the proposed energy-aware constrained RL framework reduce the total energy consumption required to complete articulated-object manipulation tasks?
    \item \textbf{RQ2 (Manipulation efficiency).} Does enforcing an energy constraint lead to shorter and smoother manipulation trajectories, as reflected by the total number of action steps to success?
    \item \textbf{RQ3 (Task reliability).} Does incorporating an explicit energy constraint affect the success rate of completing real-world–relevant articulated-object operations?
    \item \textbf{RQ4 (Constraint satisfaction).} During training, does the policy consistently satisfy the long-horizon energy constraint imposed by the violation threshold~$\theta$?
\end{itemize}

\textbf{Evaluation Metrics}
We evaluate each method using three task-oriented metrics:
\begin{itemize}
    \item \textbf{Total Energy Cost.} The accumulated energy cost over an entire episode, computed using the energy model described in Section~\ref{sec:mdp_details}. This metric quantifies the effectiveness of energy-aware policy shaping.
    \item \textbf{Steps to Completion.} The number of control steps required to successfully complete the task, reflecting manipulation efficiency and smoothness of the executed trajectory.
    \item \textbf{Success Rate.} The fraction of evaluation episodes in which the robot successfully completes the articulated manipulation (door opening, drawer pulling, or valve rotation).
\end{itemize}

\textbf{Task Scenarios}
All methods are evaluated on three representative infrastructure O\&M manipulation tasks, as illustrated in Figures \ref{fig:env_repre} and \ref{fig:dataset}:
\begin{itemize}
    \item \textbf{OpenDoor}: rotating a hinged door around its revolute joint.
    \item \textbf{OpenDrawer}: pulling out a sliding drawer along its prismatic joint.
    \item \textbf{TurnValve}: rotating a valve handle around a revolute joint.
\end{itemize}

\begin{figure}
\begin{subfigure}{\linewidth}
    \centering
    \captionsetup{font=normalsize}
    \includegraphics[width=0.32\linewidth, height=120pt, trim=0 0 60 0, clip]{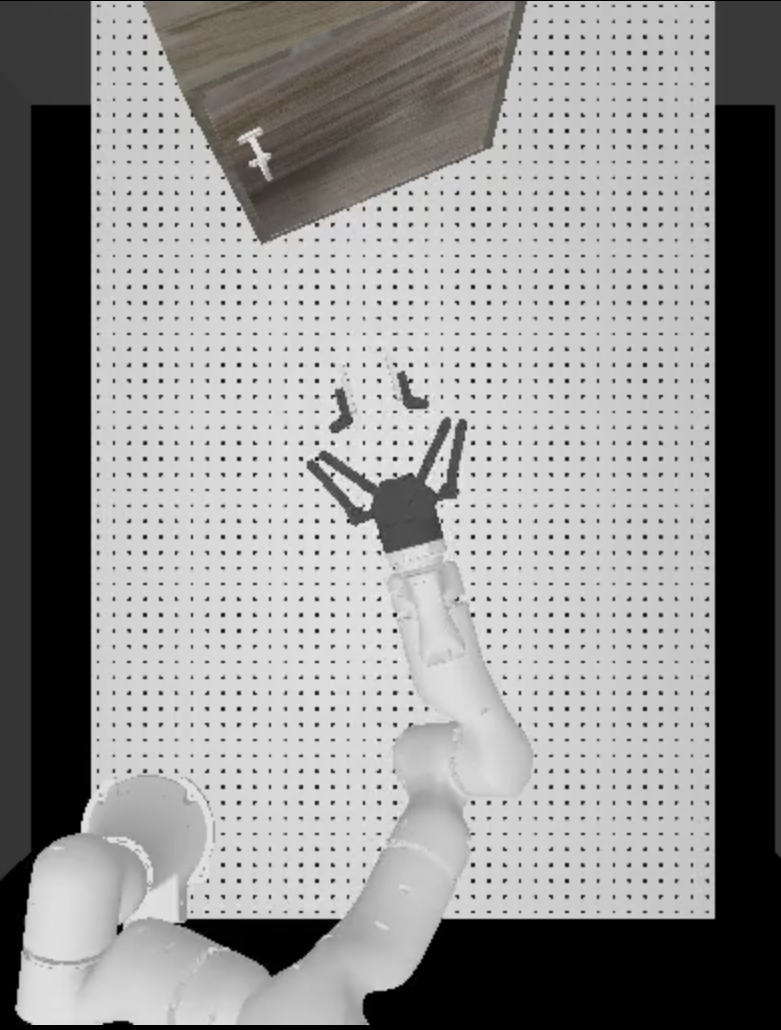}
    \hfill
    \includegraphics[width=0.32\linewidth, height=120pt, trim=0 0 60 0, clip]{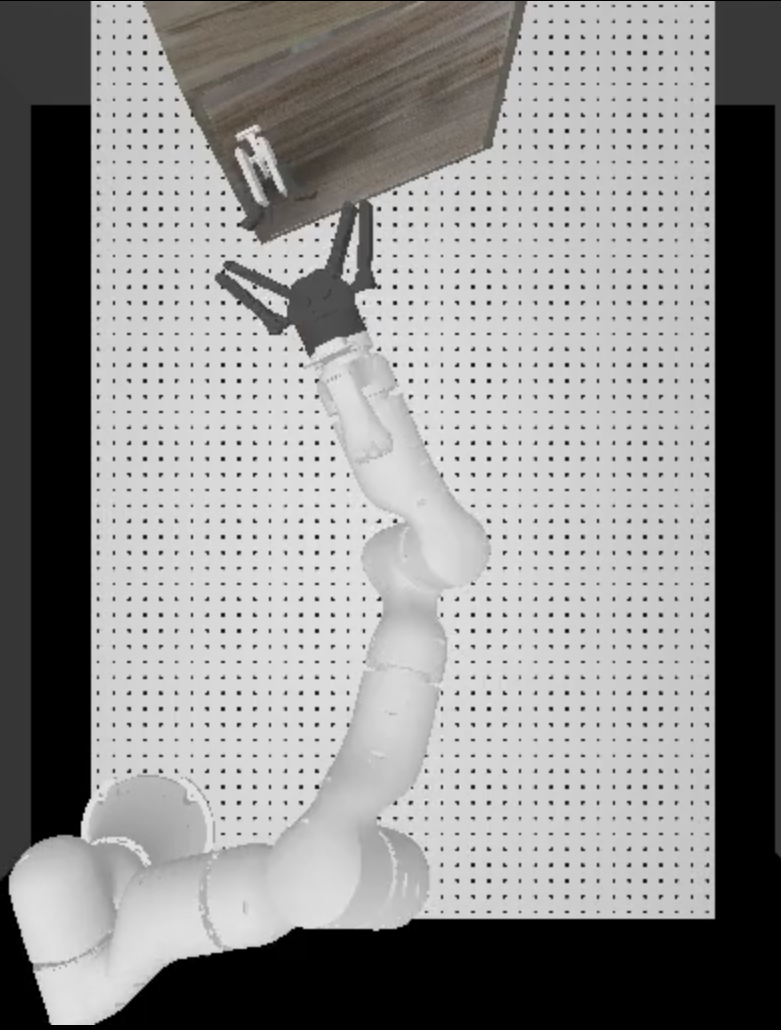}
    \hfill
    \includegraphics[width=0.32\linewidth, height=120pt, trim=20 0 0 0, clip]{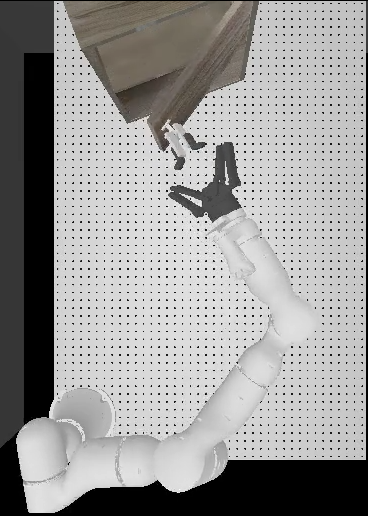}
    \hfill
    \caption{OpenDoor task.}
    \label{fig:suba}
    \hfill
\end{subfigure}
\hfill
\begin{subfigure}{\linewidth}
    \centering
    \captionsetup{font=normalsize}
    \includegraphics[width=0.32\linewidth, height=120pt, trim=0 0 50 0, clip]{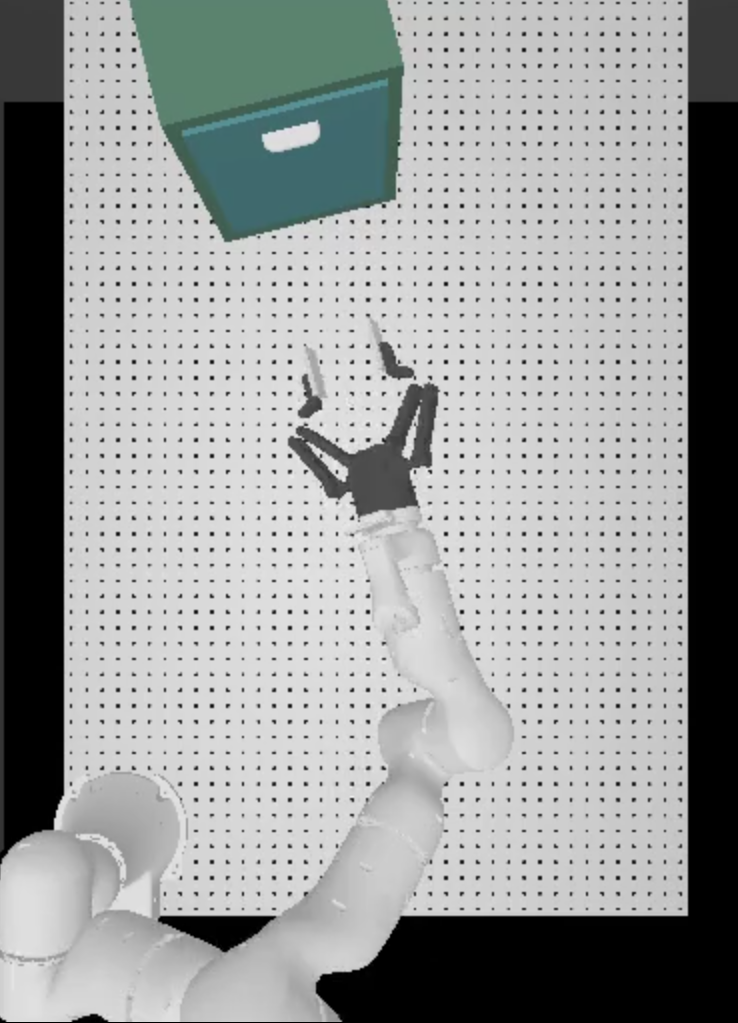}
    \hfill
    \includegraphics[width=0.32\linewidth, height=120pt, trim=0 0 50 0, clip]{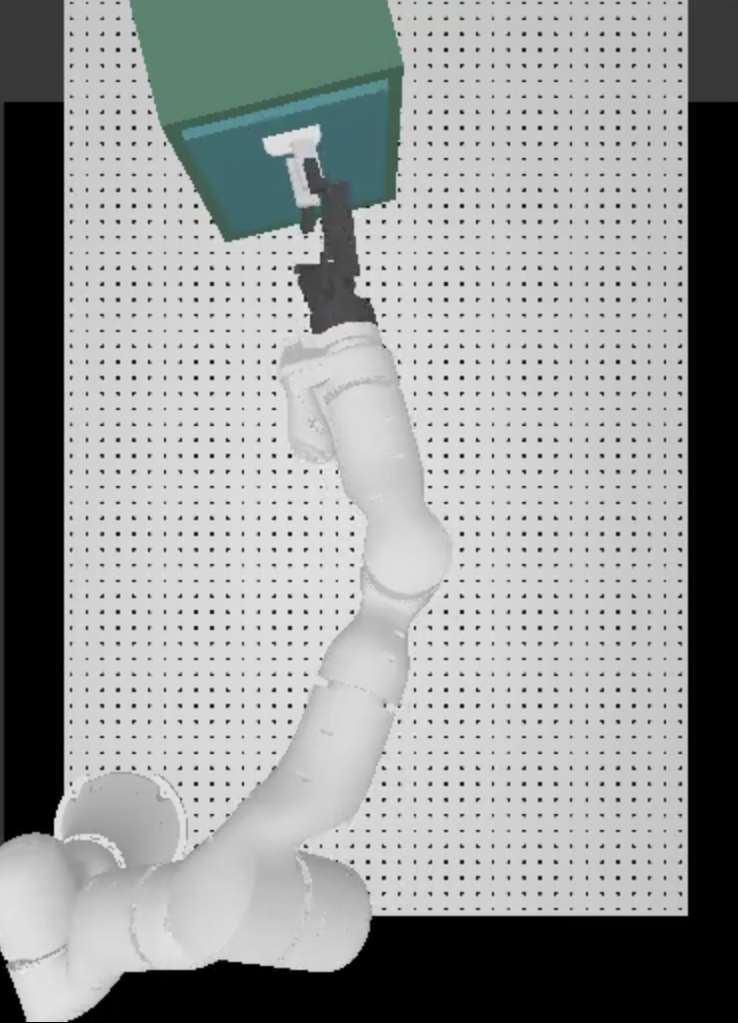}
    \hfill
    \includegraphics[width=0.32\linewidth, height=120pt, trim=20 0 0 0, clip]{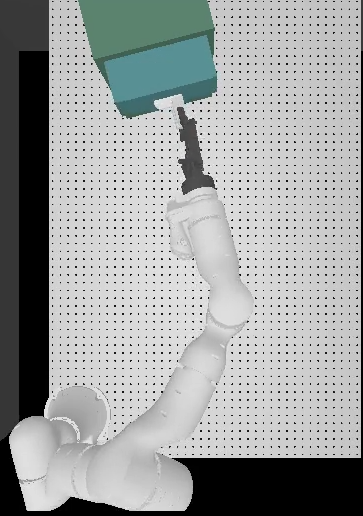}
    \caption{OpenDrawer task.}
    \label{fig:subb}
    \hfill
\end{subfigure}
\hfill
\begin{subfigure}{\linewidth}
    \centering
    \captionsetup{font=normalsize}
    \includegraphics[width=0.32\linewidth, height=120pt]{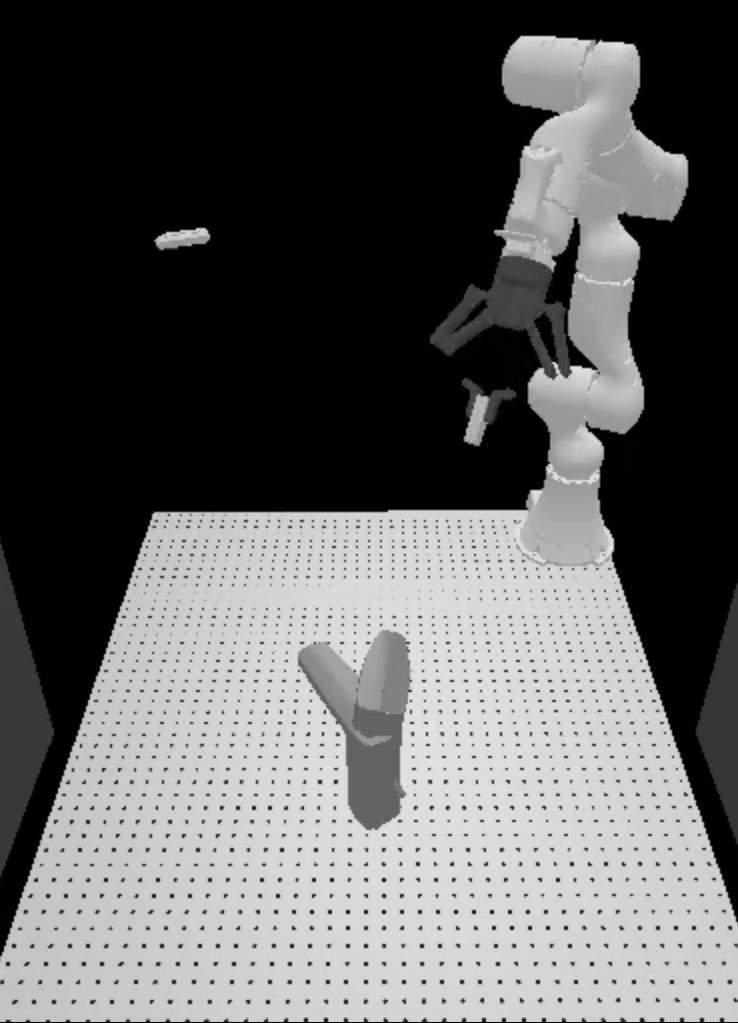}
    \hfill
    \includegraphics[width=0.32\linewidth, height=120pt]{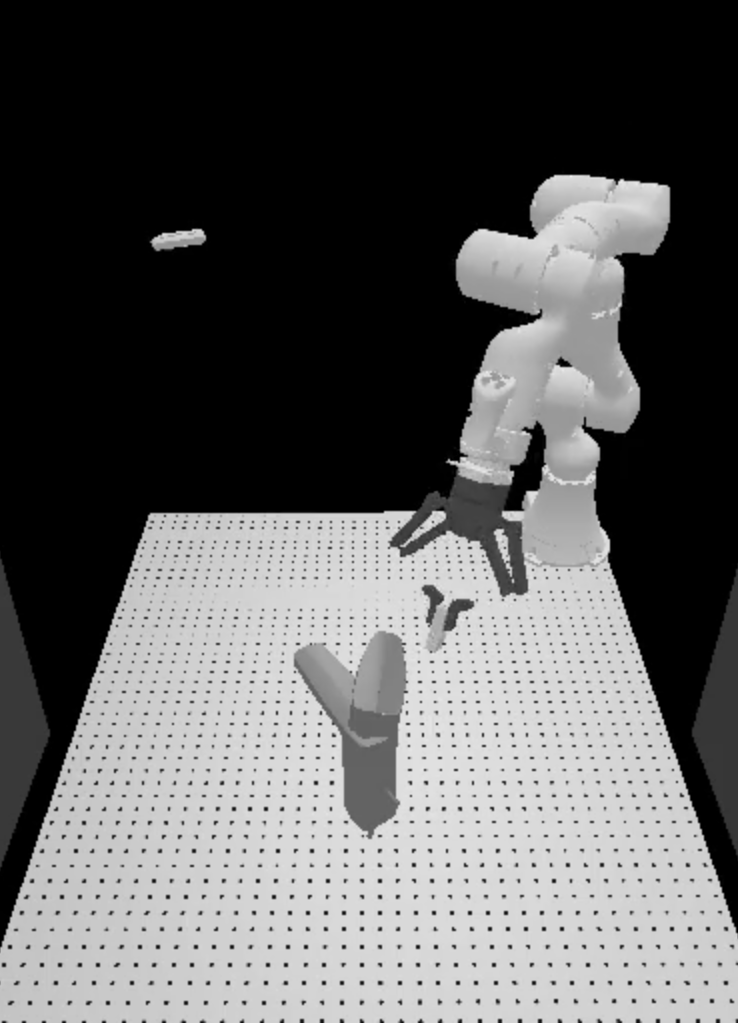}
    \hfill
    \includegraphics[width=0.32\linewidth, height=120pt]{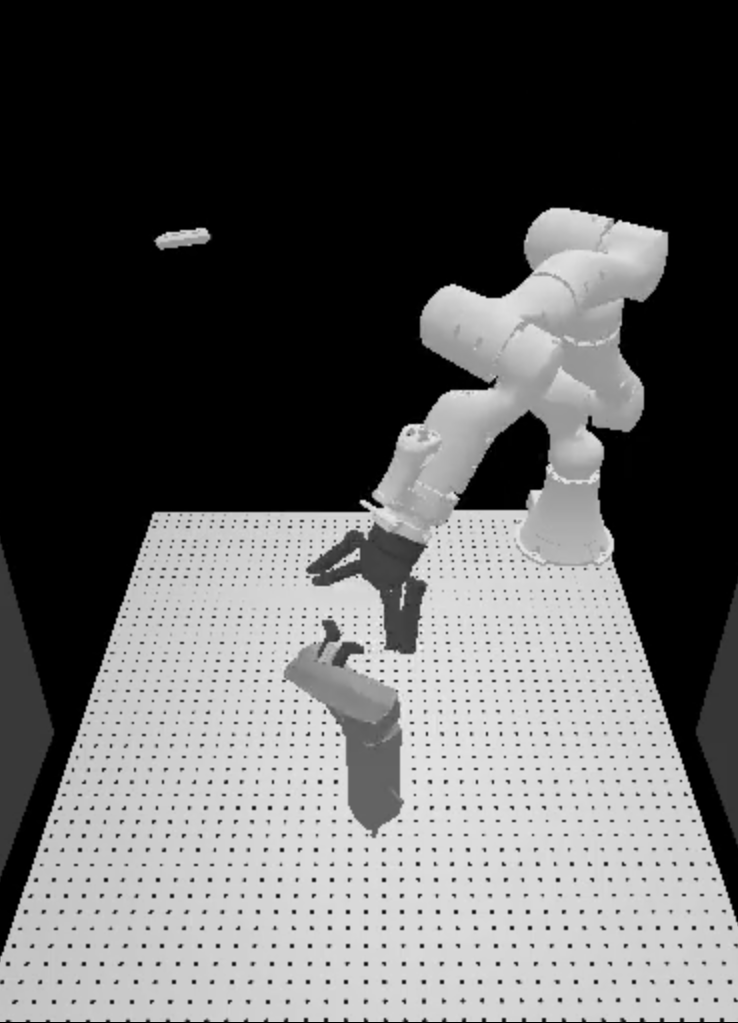}
    \caption{TurnValve task.}
    \label{fig:subc}
\end{subfigure}
\caption{{\normalsize{Tasks in simulated environments.}}}

\label{fig:env_repre}
\end{figure}

These tasks span distinct articulation types and interaction affordances, and together cover a broad set of real-world maintenance behaviours.

\textbf{Baselines}
To assess the benefit of energy-aware constrained optimisation, we compare the following methods:
\begin{itemize}
    \item \textbf{SAC.} A standard SAC agent without cost modelling or constraint enforcement. This baseline represents conventional RL approaches commonly used in robotic manipulation.
    \item \textbf{Constrained SAC (Ours).} The full framework described in Section~\ref{overall}, which integrates part-guided perception, articulation-agnostic control, and a Lagrangian-based constrained SAC.
    
\end{itemize}
Both agents share identical network architectures, perception modules, and training settings to ensure fair comparison.

\begin{figure}
\begin{subfigure}{\linewidth}
    \centering
    \captionsetup{font=normalsize}
    \includegraphics[width=\linewidth, height=75pt]{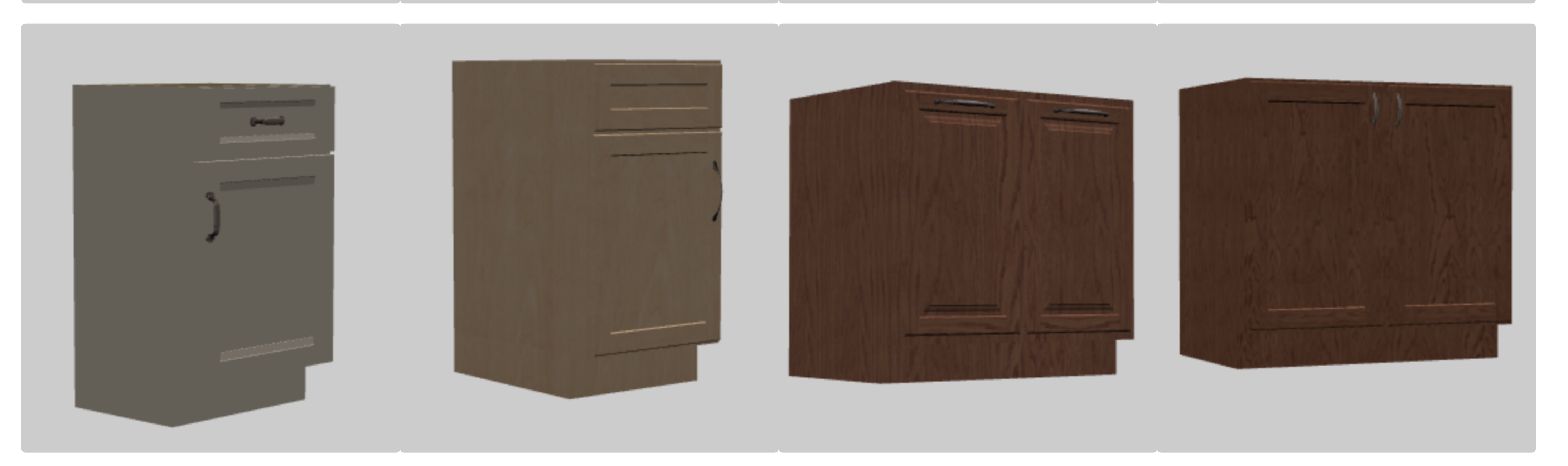}
    \caption{Cabinets with doors.}
    \label{fig:suba}
\end{subfigure}
\hfill
\begin{subfigure}{\linewidth}
    \centering
    \captionsetup{font=normalsize}
    \includegraphics[width=\linewidth, height=75pt]{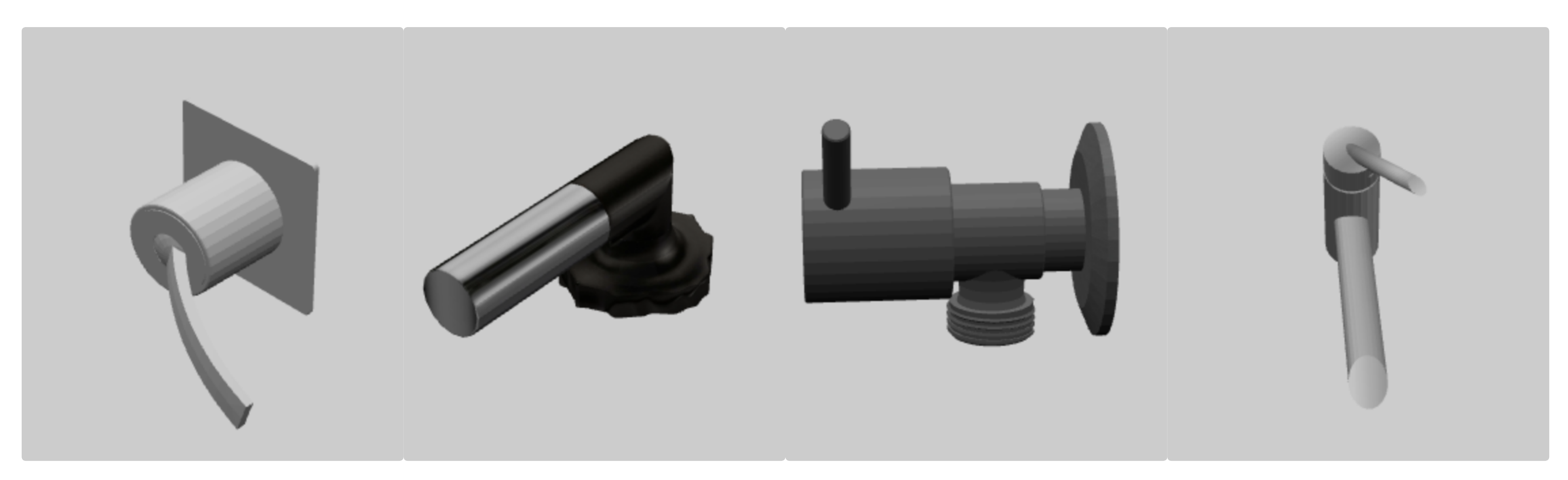}
    \caption{Chests of drawers.}
    \label{fig:subb}
\end{subfigure}
\hfill
\begin{subfigure}{\linewidth}
    \centering
    \captionsetup{font=normalsize}
    \includegraphics[width=\linewidth, height=75pt]{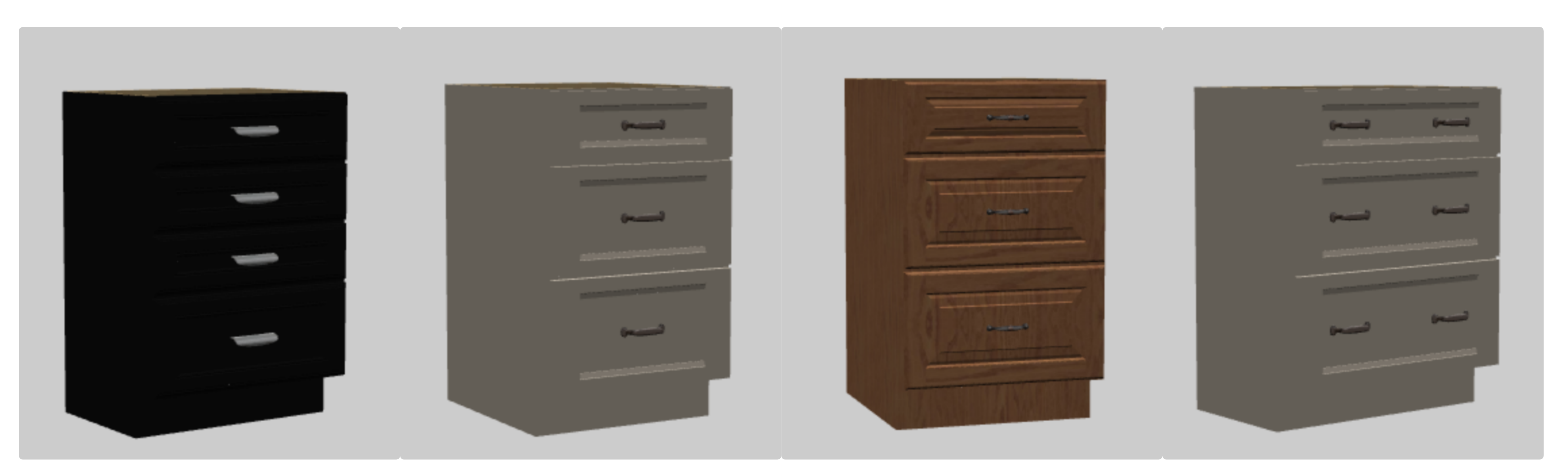}
    \caption{Valves.}
    \label{fig:subc}
\end{subfigure}

\caption{Some of the objects from 3 categories used in training and evaluation. All objects used in evaluation are not used in training, which are unseen to the pretrained models.}
\label{fig:dataset}
\end{figure}


\subsection{Energy Consumption Comparison with SAC}
\label{sec:energy_results}

We first examine whether the proposed energy-aware control scheme improves actuation efficiency compared with standard SAC. 
Figure~\ref{fig:energy_curve} provides a training-time view of total energy consumption under both methods, while Table~\ref{tab:energy} reports quantitative results after convergence. {
In addition to the unconstrained SAC baseline (SAC without $r^{Energy}$), we further compare our method with a commonly used energy-aware variant in which an energy penalty term is directly incorporated into the reward function, denoted as \emph{SAC with $r^{Energy}$}. This baseline represents a standard reward-shaping approach to energy regulation.}

From Table~\ref{tab:energy}, compared with SAC without $r^{Energy}$, Constrained SAC consistently reduces the total energy cost required to complete each task: on OpenDoor, the total energy drops from $79.34 \pm 3.10$ to $63.86 \pm 1.35$, corresponding to a \textbf{19.51\%} saving;
on OpenDrawer, energy decreases from $78.11 \pm 1.18$ to $65.24 \pm 4.04$ (\textbf{16.49\%} saving);
and on TurnValve, energy is reduced from $52.80 \pm 3.77$ to $36.91 \pm 1.49$, yielding a \textbf{30.08\%} improvement.
The bar plots in Figure~\ref{fig:energy_curve} further show that these gaps emerge early during training and remain stable as learning progresses. 
{
\color{black}
In addition, when compared with the reward-penalty baseline SAC with $r^{Energy}$, Constrained SAC still demonstrates clear advantages in energy efficiency.
Specifically, on OpenDoor, the total energy is further reduced from $64.71 \pm 2.38$ to $63.86 \pm 1.35$, corresponding to an additional \textbf{1.31\%} saving;
on OpenDrawer, energy consumption decreases from $75.16 \pm 1.80$ to $65.24 \pm 4.04$, yielding a \textbf{13.19\%} reduction;
and on TurnValve, energy drops from $49.75 \pm 2.94$ to $36.91 \pm 1.49$, resulting in a substantial \textbf{25.81\%} improvement.
These results indicate that explicitly enforcing energy as a constraint achieves consistently lower energy consumption than simply penalising energy in the reward.}

\begin{table}
\centering
\caption{Comparison of total energy consumption on different tasks.
\label{tab:energy}}
\begin{tabular}{@{}lccc@{}}
\rowcolor{headergray}
\textbf{Task} & \textbf{OpenDoor} & \textbf{OpenDrawer} & \textbf{TurnValve} \\
Constrained SAC (Ours) & 63.86 $\pm$ 1.35 & 65.24 $\pm$ 4.04 & 36.91 $\pm$ 1.49 \\
\rowcolor{rowgray}
{SAC without $r^{Energy}$}  & 79.34 $\pm$ 3.10 & 78.11 $\pm$ 1.18 & 52.80 $\pm$ 3.77 \\
Our Energy Savings & 19.51\% & 16.49\% & 30.08\% \\
\rowcolor{rowgray}
{SAC with $r^{Energy}$} & {64.71 $\pm$ 2.38} & {75.16 $\pm$ 1.80 }& {49.75 $\pm$ 2.94 }\\
{Our Energy Savings} & {1.31\%} & {13.19\%} & {25.81\%} \\
\end{tabular}
\begin{tablenotes}
\item Notes: Results are the average of 5 sets of experiments with unique seeds, with each set attempted 50 times.
\end{tablenotes}
\end{table}

\begin{figure*}[t]
\centering

\centerline{\includegraphics[width=\linewidth,height=16pc]{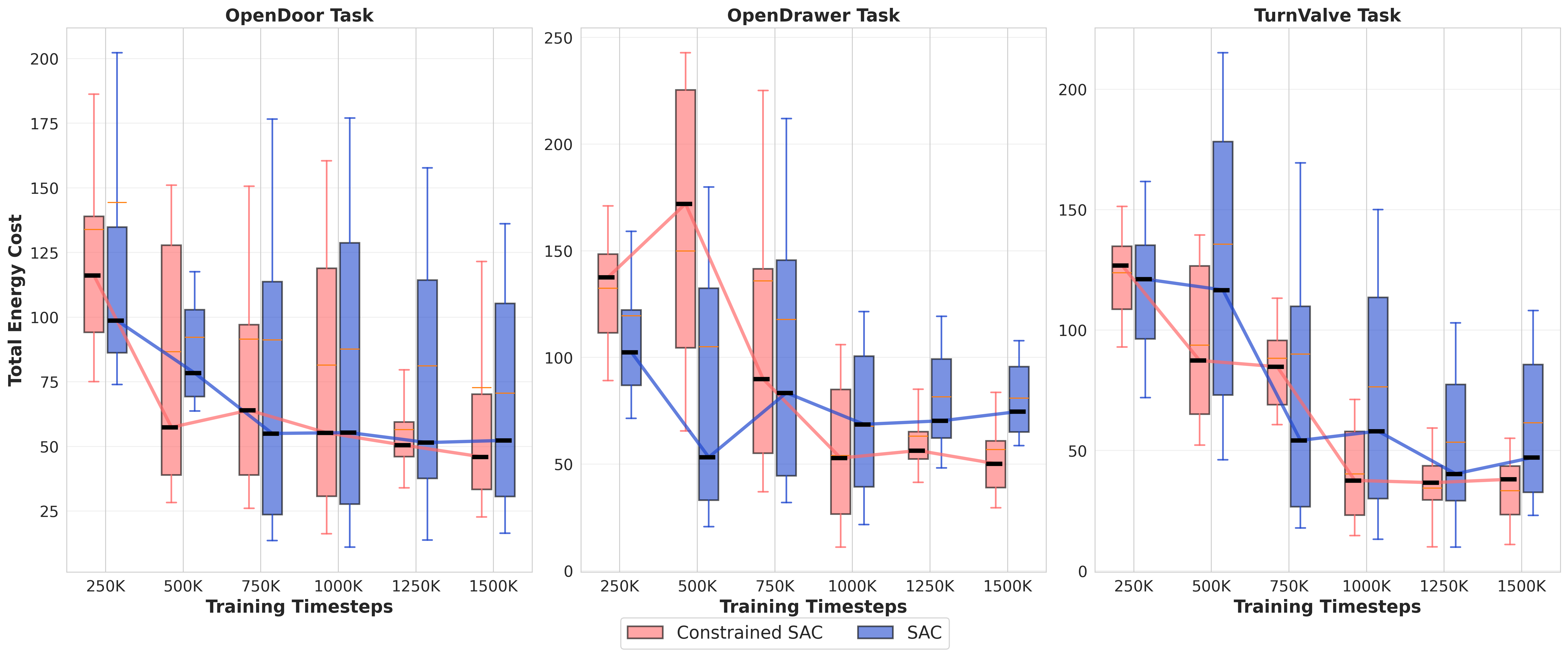}}
\caption{Total energy cost during training for the three tasks. For each checkpoint, the central box represents the interquartile range (IQR), spanning from the first quartile (Q1) to the third quartile (Q3), the horizontal line inside the box denotes the mean total energy cost per episode for Constrained SAC and SAC, and the whiskers extend from the edges of the box to the smallest and largest values from Q1 and Q3, respectively; the overlaid lines indicate the temporal evolution of the corresponding averages.}
\label{fig:energy_curve}
\end{figure*}

These results provide a clear positive answer to \textbf{RQ1}: explicitly modelling energy as a CMDP cost and optimising with Constrained SAC leads to significantly more energy-efficient manipulation than an unconstrained SAC baseline.


\subsection{Manipulation Efficiency: Steps to Completion}
\label{sec:steps_results}

We next study how the energy-aware constraint affects manipulation efficiency, measured by the number of control steps required to complete a task.
Table~\ref{tab:steps} summarises the results.

\begin{table}
\centering
\caption{Comparison of total steps to success on different tasks.
\label{tab:steps}}
\begin{tabular}{@{}lccc@{}}
\rowcolor{headergray}
\textbf{Task} & \textbf{OpenDoor} & \textbf{OpenDrawer} & \textbf{TurnValve} \\
Constrained SAC & 28.28 $\pm$ 0.62 & 31.63 $\pm$ 2.49 & 23.96 $\pm$ 1.20 \\
\rowcolor{rowgray}
{SAC without $r^{Energy}$ }   & 41.69 $\pm$ 2.57 & 37.98 $\pm$ 0.83 & 29.51 $\pm$ 1.81 \\
Our Step Reduction   & 32.16\% & 16.72\% & 18.81\% \\
\rowcolor{rowgray}
{SAC with $r^{Energy}$ } & {67.15 $\pm$ 5.76} & {29.91 $\pm$ 2.15 }& {29.48 $\pm$ 1.54 }\\
{Our Step Reduction}   & {57.89\%} & {-5.75\% }& {18.72\% }\\
\end{tabular}
\begin{tablenotes}
\item Notes: Results are the average of 5 sets of experiments with unique seeds, with each set attempted 50 times.
\end{tablenotes}
\end{table}

Constrained SAC achieves shorter trajectories on all three tasks when compared with the SAC without $r^{Energy}$. 
For OpenDoor, the required steps decrease from $41.69 \pm 2.57$ (SAC) to $28.28 \pm 0.62$, a \textbf{32.16\%} reduction. 
On OpenDrawer, steps are reduced from $37.98 \pm 0.83$ to $31.63 \pm 2.49$ (\textbf{16.72\%} reduction), 
and on TurnValve from $29.51 \pm 1.81$ to $23.96 \pm 1.20$ (\textbf{18.81\%} reduction).{
\color{black}
In addition, when compared with the reward-penalty baseline SAC with $r^{Energy}$, Constrained SAC still demonstrates clear advantages in manipulation efficiency.
Specifically, on OpenDoor, the required number of steps is reduced from $67.15 \pm 5.76$ to $28.28 \pm 0.62$, yielding a substantial \textbf{57.89\%} reduction.
On OpenDrawer, SAC with $r^{Energy}$ requires $29.91 \pm 2.15$ steps, which is comparable to Constrained SAC, resulting in a marginal difference of \textbf{-5.75\%}.
On TurnValve, the number of steps decreases from $29.48 \pm 1.54$ to $23.96 \pm 1.20$, corresponding to a \textbf{18.72\%} reduction. Qualitatively, the constrained policy exhibits smoother and more decisive motions, avoiding unnecessary back-and-forth corrections that are frequently observed in both the SAC without $r^{Energy}$ and the reward-penalty baseline.
}

These findings answer \textbf{RQ2}: enforcing an energy constraint not only reduces actuation cost, but also leads to more efficient and direct manipulation trajectories. 
This behaviour is consistent with the first contribution, where articulation-agnostic control is achieved through part-guided perception while remaining compatible with energy-aware optimisation.


\subsection{Task Success Rate and Reliability}
\label{sec:success_results}

Finally, we evaluate how energy-aware constrained optimisation affects the overall reliability of articulated-object manipulation.
Table~\ref{tab:success} reports the success rate on each task.

\begin{table}
\centering
\caption{Comparison of success rate on different tasks.
\label{tab:success}}
\begin{tabular}{@{}lccc@{}}
\rowcolor{headergray}
\textbf{Task} & \textbf{OpenDoor} & \textbf{OpenDrawer} & \textbf{TurnValve} \\
Constrained SAC (\%) & 98.8 $\pm$ 1.79 & 94.0 $\pm$ 3.16 & 56.8 $\pm$ 8.32 \\
\rowcolor{rowgray}
{SAC without $r^{Energy}$ }(\%)  & 83.6 $\pm$ 4.56 & 97.2 $\pm$ 1.79 & 54.8 $\pm$ 8.32  \\
Our Improvement  & 15.2\% & -3.2\% & 2.0\% \\
{SAC with $r^{Energy}$ (\%)} & {2.55 $\pm$ 0.87} & {94.0 $\pm$ 1.53 }& {55.8 $\pm$ 10.36 }\\
{Our Improvement}  &{3774.51\%} & {0.0\%} & { 1.79\%} \\
\end{tabular}
\begin{tablenotes}
\item Notes: Results are the average of 5 sets of experiments with unique seeds, with each set attempted 50 times.
\end{tablenotes}
\end{table}

{\color{black}
Constrained SAC maintains high success rates across all three tasks when compared with the SAC without $r^{Energy}$.} On OpenDoor, Constrained SAC improves the success rate from $83.6 \pm 4.56\%$ to $98.8 \pm 1.79\%$, a \textbf{15.2\%} absolute gain.
On OpenDrawer, the success rates of both methods are comparable (Constrained SAC: $94.0 \pm 3.16\%$, SAC: $97.2 \pm 1.79\%$), 
indicating that the energy constraint does not hinder performance on this relatively easier task.
On TurnValve, both methods achieve similar success rates (within the reported variance), with a slight advantage for Constrained SAC. {\color{black}
In addition, when compared with the reward-penalty baseline SAC with $r^{Energy}$, the advantages of the constrained formulation become more pronounced.
Specifically, on OpenDoor, the success rate increases dramatically from $2.55 \pm 0.87\%$ to $98.8 \pm 1.79\%$, representing a \textbf{3774.51\%} relative improvement.
On OpenDrawer, both methods achieve identical success rates ($94.0\%$), resulting in no measurable difference.
On TurnValve, the success rate improves slightly from $55.8 \pm 10.36\%$ to $56.8 \pm 8.32\%$, corresponding to a \textbf{1.79\%} increase.
}

{Compared with OpenDoor and OpenDrawer, the TurnValve task exhibits a lower success rate, reflecting the increased difficulty of manipulation under higher mechanical resistance and tighter rotational constraints. Empirically, unsuccessful episodes are often accompanied by insufficient torque buildup in early interaction stages or unstable contact near the rotational joint, limiting sustained rotation.}

Across all three scenarios, the constrained policy maintains high success rates while achieving substantial energy savings and step reductions. 
This demonstrates that the proposed framework does not trade task performance for energy efficiency; instead, it achieves a favourable balance between the two, in line with the multi-objective reward design in Section~\ref{sec:mdp_details}. 
These observations provide empirical support for \textbf{RQ3}: the end-to-end constrained optimisation yields reliable performance suitable for long-term infrastructure operation, rather than merely optimising a single-task metric.

\subsection{Constraint Satisfaction and Training Dynamics}
\label{sec:convergence}

Figure~\ref{fig:constraint_cost} plots the evolution of the constraint cost during training for the three articulated-object tasks, together with the corresponding violation threshold~$\theta$. {The violation threshold is not fixed throughout training; instead, \textcolor{black}{it is scheduled in a piecewise-constant manner. Specifically, it is initialised at $0.060$, reduced to $0.055$ and $0.050$ at $3\times10^5$ and $6\times10^5$ training steps, respectively, and finally tightened to $0.045$ after $10^6$ steps, as illustrated by the red dashed curve in Fig.~\ref{fig:constraint_cost}.}
This design allows the agent to first acquire basic manipulation competence under a relatively loose energy budget, before gradually enforcing stricter energy regulation.}

Across OpenDoor, OpenDrawer, and TurnValve, the constrained SAC agent initially explores higher-energy behaviours, after which the dual-ascent update progressively tightens the constraint and suppresses excessive actuation.
As training proceeds, the average constraint cost of all tasks converges to values at or below the threshold, confirming that the long-horizon energy constraint is effectively enforced in practice.

These trends directly support \textbf{RQ4}: the proposed Lagrangian constrained formulation yields stable primal--dual dynamics and achieves consistent constraint satisfaction over the course of training, in line with the theoretical properties discussed in Section~\ref{sec:optimization}.

\begin{figure}[t]
\centering
\centerline{\includegraphics[width=\linewidth,height=16pc]{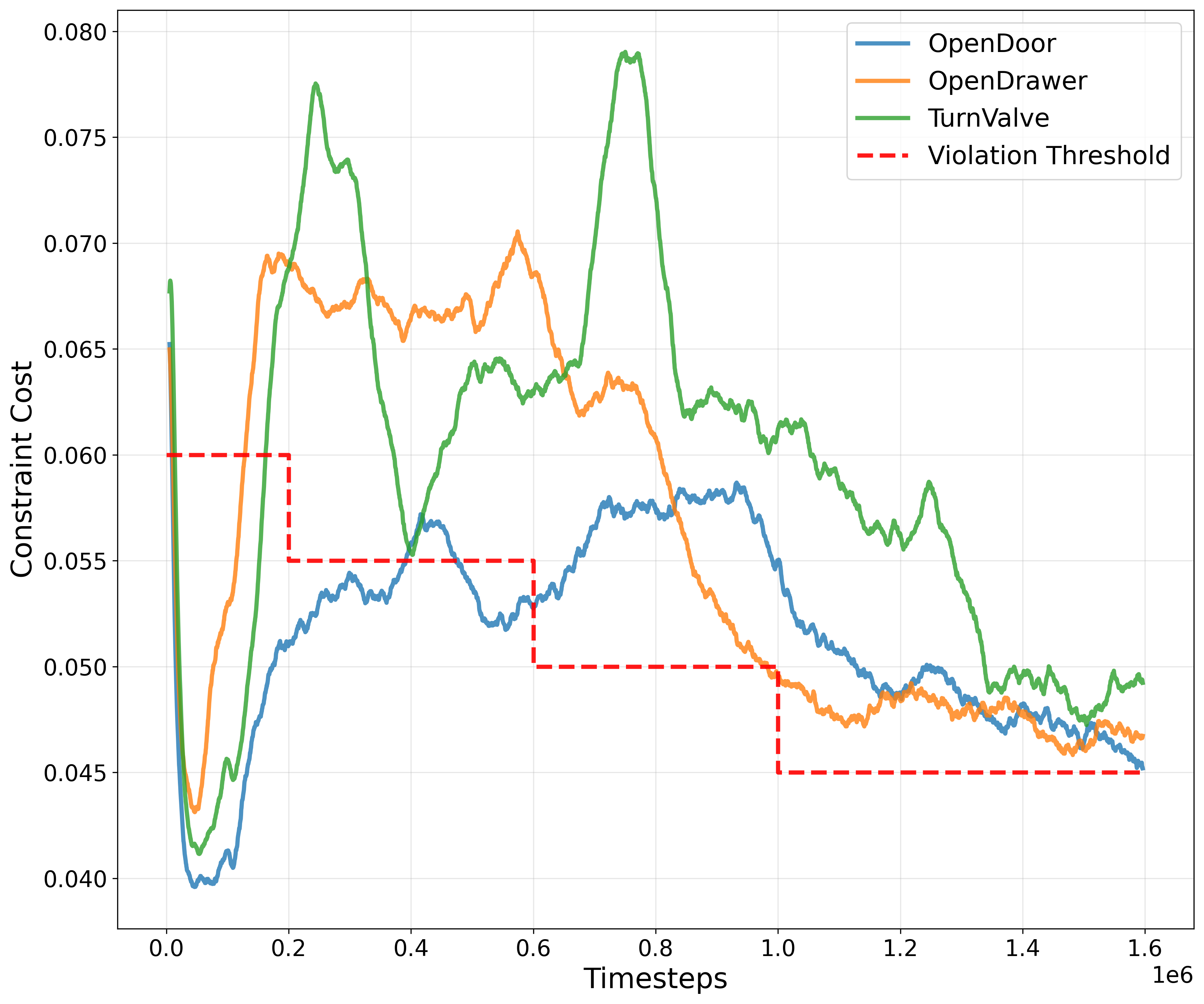}}
\caption{Training dynamics of the constraint cost for the three tasks. Solid lines denote the mean constraint cost across seeds; the red dashed curve indicates the violation threshold~$\theta$ scheduled during training.}
\label{fig:constraint_cost}
\end{figure}

Taken together, the results in Sections~\ref{sec:energy_results}-\ref{sec:convergence} show that the proposed articulation-agnostic, energy-aware manipulation framework  
(i) substantially reduces energy consumption, 
(ii)  shortens manipulation trajectories, and  
(iii) maintains or improves task success rates across diverse articulated infrastructure components,
(iv) satisfies the prescribed energy constraint during training.

{\color{black}
\subsection{Perception Module Reliability and Runtime Analysis}
\label{sec:perception_results}

Although perception accuracy is not the primary objective of this work, we report segmentation performance and inference latency to verify that the part-guided perception pipeline provides stable and real-time inputs for downstream control.

\begin{table*}
\centering
\caption{Segmentation Performance and Inference Time Statistics}
\label{tab:seg}
\begin{tabular}{@{}lccccccc@{}}
\rowcolor{headergray}
{\textbf{Metric}} & {\textbf{Overall}} & {\textbf{Handle}} & {\textbf{Door}} & {\textbf{Cabinet}} & {\textbf{FixLink}} & {\textbf{SwitchLink}} & {\textbf{Other}}\\
{mIoU} & {0.4372} & {0.1945} & {0.3614} & {0.1404} & {0.4325} & {0.3879} & {0.6218}\\
\rowcolor{rowgray}
{mF1} & {0.6084} & {0.2598} & {0.4529} & {0.2062} & {0.4405} & {0.4009} & {0.7435} \\
{Average Time} & \multicolumn{7}{c}{{2.34 $\pm$ 0.12 ms}} \\
\rowcolor{rowgray}
{Min/Max Time} & \multicolumn{7}{c}{{2.24 / 2.82 ms}} \\
{Throughput} & \multicolumn{7}{c}{{427.27 FPS}} \\
\end{tabular}
\begin{tablenotes}
\item {Notes: mIoU (mean Intersection over Union) and mF1 (mean F1 score) are computed using multiclass segmentation metrics with micro reduction across six classes (handle, door, cabinet, fixlink, switchlink, other). Inference time statistics are measured on a real image dataset with GPU synchronization.}
\end{tablenotes}
\end{table*}

As shown in Table~\ref{tab:seg}, the segmentation network achieves an overall mIoU of 0.437 and an mF1 score of 0.608 across all part categories. Performance varies across different parts: higher mIoU and mF1 scores are observed on visually and geometrically stable regions (e.g., \emph{FixLink} and \emph{Other}), whereas smaller or less visually distinctive parts (e.g., \emph{Handle}) exhibit lower accuracy. This behaviour is expected given the limited pixel footprint and increased visual ambiguity of small functional parts in RGB observations.

In terms of computational efficiency, the perception pipeline achieves an average inference time of $2.34 \pm 0.12$\,ms per frame, corresponding to a throughput exceeding 400 FPS. \textcolor{black}{The reported inference latency in Table~\ref{tab:seg} corresponds to the full perception-to-state pipeline, including part segmentation, uncertainty estimation, weighted point sampling, and PointNet-based geometric encoding.}

Overall, the results in Table~\ref{tab:seg} indicate that the proposed part-guided perception module provides sufficiently reliable and computationally efficient inputs for energy-aware manipulation, without becoming a bottleneck for real-time infrastructure O\&M scenarios.
}

{\color{black}
\section{Conclusions}\label{sec6}

This paper presents an articulation-agnostic and energy-aware robotic manipulation framework for intelligent infrastructure O\&M. By integrating part-guided 3D geometric perception with a constrained RL controller, the proposed method could generalize across heterogeneous articulated components while explicitly regulating long-horizon energy consumption.

Extensive simulation experiments on three representative O\&M tasks—door opening, drawer pulling, and valve turning—demonstrate the effectiveness of the proposed approach. Quantitatively, the constrained SAC controller achieves consistent energy savings of approximately 16--30\% compared with the unconstrained SAC baseline, while simultaneously reducing the number of execution steps by 16--32\% without sacrificing task success rates. These results indicate that explicitly enforcing energy as a constraint leads to smoother trajectories, fewer redundant motions, and more decisive manipulation behavior.

Moreover, the proposed framework exhibits robust generalization across different articulated mechanisms under a unified formulation. The combination of part-guided segmentation, weighted point sampling, and lightweight PointNet-based encoding enables stable performance across revolute and prismatic joints, validating the scalability of the articulation-agnostic perception--control pipeline. 

Overall, the results demonstrate that energy-aware constrained reinforcement learning, when coupled with function-centric geometric perception, provides a practical and scalable solution for long-term robotic manipulation in infrastructure O\&M settings, balancing task effectiveness and operational efficiency.

{Future work will extend the proposed framework to real-world robotic platforms and systematically study sim-to-real transfer under realistic infrastructure inspection conditions, such as sensor noise, illumination variation, and surface contamination, to further assess robustness in long-term deployment. In addition, extending the framework to more complex multi-joint articulated manipulation scenarios remains an important direction for future work.}
{
Beyond the evaluated O\&M scenarios, the proposed framework is applicable to other engineering domains such as industrial equipment operation, facility management, and energy-aware robotic manipulation, where articulated interaction and long-term efficiency are critical.
}{The current cost does not explicitly model actuator torque or electrical power. Incorporating torque- or power-based energy models, as well as conducting sensitivity analyses under alternative cost definitions, is left for future work.
}

{The TurnValve task reveals the increased difficulty of manipulation under high mechanical resistance. Addressing such cases through torque-aware control and improved joint-centric perception constitutes an important direction for future work. }{While the current study focuses on generalization across articulated geometries and interaction configurations, future work will incorporate physics-level domain randomization (e.g., friction and damping variation) to further improve robustness for sim-to-real deployment.} {Future work will explore neural dynamic classification to improve robustness under non-stationary conditions \cite{rafiei2017new}. Fast-learning supervised models such as the Finite Element Machine provide a promising direction for improving learning efficiency \cite{pereira2020fema}. Dynamic ensemble learning may further enhance adaptability in long-term deployment \cite{alam2020dynamic}. Self-supervised learning will be investigated to improve representation robustness without manual annotation \cite{rafiei2022self}.}
}

\textcolor{black}{Exploring adaptive fusion strategies for uncertainty-aware and temporal-consistency cues, and systematically evaluating their impact on representation stability and downstream policy learning, remains an important direction for future work.} \textcolor{black}{Future work will further consider more realistic joint dynamics, such as increased friction caused by mechanical wear, and evaluate their effects on task success, energy consumption, and the number of manipulation steps.}

\bmsection*{Acknowledgments}
This research was supported by the Joint Funds of the National Natural Science Foundation of China under Grant U1864206. Xiaowen Tao and Yinuo Wang contribute equally to this work. Note: This version supersedes all previous preprint versions. For reuse of any content in any version of this work, please contact the corresponding author. Upon journal publication, copyright may be transferred to the publisher, and reuse of content from the published version should follow the publisher's permissions process.

{\color{black}
\bibliography{wileyNJD-APA}
}

\end{document}